\documentclass[12pt]{article}
\usepackage{a4wide}
\usepackage{latexsym}
\usepackage{amsmath}
\usepackage{amsfonts}
\usepackage{amscd}
\usepackage{amsthm}
\usepackage{cite}
\usepackage{shuffle}

\usepackage{pslatex}
\usepackage{graphicx}
\usepackage[latin1,utf8]{inputenc}
\usepackage[T1]{fontenc}

% Stefan's abbreveations
\newcommand{\bq}{\begin{eqnarray}}
\newcommand{\eq}{\end{eqnarray}}
\newcommand{\eps}{\varepsilon}
\newcommand{\qbar}{\bar{q}}
\newcommand{\slashoperator}[2]{|_{#2} #1}

\theoremstyle{plain}
\newtheorem{theoremcounter}{}[]
\newtheorem{lemma}[theoremcounter]{Lemma}

\allowdisplaybreaks

%------------------------------------------------------------------------------

\begin{document}

\thispagestyle{empty}

\begin{flushright}
  MITP/22-070
% \\ version of \today
\end{flushright}

\vspace{1.5cm}

\begin{center}
  {\Large\bf The three-loop equal-mass banana integral in $\varepsilon$-factorised form with meromorphic modular forms \\
  }
  \vspace{1cm}
  {\large Sebastian P\"ogel, Xing Wang and Stefan Weinzierl \\
  \vspace{1cm}
      {\small \em PRISMA Cluster of Excellence, Institut f{\"u}r Physik, }\\
      {\small \em Johannes Gutenberg-Universit{\"a}t Mainz,}\\
      {\small \em D - 55099 Mainz, Germany}\\
  } 
\end{center}

\vspace{2cm}

% abstract ---------------------------------------
\begin{abstract}\noindent
  {
We show that the differential equation for the three-loop equal-mass banana integral can be cast 
into an $\varepsilon$-factorised form with entries constructed from 
(meromorphic) modular forms and one special function, 
which can be given as an iterated integral of meromorphic modular forms. 
The $\varepsilon$-factorised form of the differential equation allows for a systematic solution 
to any order in the dimensional regularisation parameter $\varepsilon$.
The alphabet of the iterated integrals contains six letters. 
   }
\end{abstract}

\vspace*{\fill}

% -----------------------------------------------------------------------------
\newpage

\section{Introduction}
\label{sect:intro}

For precision calculations in high-energy particle physics we need to evaluate Feynman integrals.
It is a very interesting question which special functions appear in the final result of such a calculation.
In the simplest case we find multiple polylogarithms \cite{Goncharov_no_note,Goncharov:2001,Borwein,Moch:2001zr}, which are iterated integrals of differential one-forms
$dy/(y-c)$, where $c$ is a constant.
An example is the one-loop bubble integral with equal non-zero masses.
The equal-mass one-loop bubble integral can be expressed to all orders in the dimensional regularisation parameter $\eps$
in terms of multiple polylogarithms.
A second non-trivial example is given by the massless two-loop double box integrals \cite{Smirnov:1999gc,Tausk:1999vh}.

More complicated Feynman integrals are related to elliptic curves and evaluate to iterated integrals of modular forms
and/or specific differential one-forms related to the Kronecker function.
The most prominent example is the two-loop sunrise integral with equal non-zero masses.
The equal-mass two-loop sunrise integral can be expressed to all orders in the dimensional regularisation parameter $\eps$
in terms of iterated integrals of modular forms.

In this paper we consider the three-loop banana integral with equal non-zero masses.
The three-loop banana integral is the next more complicated integral in the family of $l$-loop banana integrals.
The first two members of this family are 
the one-loop bubble integral
and the two-loop sunrise integral.
The geometry of the three-loop banana integral is related to a Calabi-Yau $2$-fold (i.e. a surface) and thus more complicated than 
an elliptic Feynman integral, which in this context can be viewed as related to a Calabi-Yau $1$-fold (i.e. a curve).

A convenient tool to obtain the result for a Feynman integral to any desired order 
in the dimensional regularisation parameter $\eps$ is the method of differential equations \cite{Kotikov:1990kg,Kotikov:1991pm,Remiddi:1997ny,Gehrmann:1999as}.
For example, this method has been used at the beginning of the millennium 
to obtain the two-loop master integrals for $\gamma^\ast \rightarrow 3 \; \mbox{jets}$ \cite{Gehrmann:2000zt,Gehrmann:2001ck}.
The solution of the differential equation is significantly simplified, if the differential equation is in $\eps$-factorised form \cite{Henn:2013pwa}.
If the differential equation is in $\eps$-factorised form, we may read off the letters appearing in the iterated integrals from the differential
equation.
Such a form for the differential equation has been found for many Feynman integrals evaluating to multiple polylogarithms
and selected examples of elliptic Feynman integrals \cite{Adams:2018yfj,Bogner:2019lfa,Muller:2022gec}.
In this paper we present the differential equation for the equal-mass three-loop banana integral in $\eps$-factorised form.

The three-loop banana integral has been studied in the literature in the past \cite{Bloch:2014qca,Primo:2017ipr,Broedel:2019kmn,Broedel:2021zij,Klemm:2019dbm,Bonisch:2020qmm,Kreimer:2022fxm} and we should carefully explain what is new in this article.

Let $I$ be a vector of master integrals. The differential equation 
\bq
 d I & = & \eps A I
\eq
is said to be in $\eps$-factorised form, if the connection matrix $A$ is independent of $\eps$.
An $\eps$-factorised form can be achieved, if a full set of homogeneous solutions is known.
The set of homogeneous solutions can be obtained by integrating the integrands of the master integrals over
a set of independent contours \cite{Primo:2017ipr}.
This path has been followed in refs.~\cite{Primo:2017ipr,Frellesvig:2021hkr} and an $\eps$-factorised form of the
differential equation follows from the results of these papers.
However, this is not the form we are interested in.
It can be shown in the two-loop sunrise case that the solution for the Feynman integrals obtained in this way is not
of uniform weight.

The first term of the $\eps$-expansion of the three-loop banana integral 
(normalised to a homogeneous solution of the Picard-Fuchs operator)
is known from ref.~\cite{Bloch:2014qca}.
It corresponds to an Eichler integral
\bq
\label{Eichler_integral_banana}
 \left[ I\left(1,1,f_4;\tau\right) + \frac{4}{3} \zeta_3 \right] \eps^3,
\eq
where $f_4$ is a modular form for $\Gamma_1(6)$ of modular weight $4$. The notation for iterated integrals is defined in sect.~\ref{sect:iterated_integrals}.
In the main part of the paper $f_4$ will be denoted as $f_{4,a}$.
This mirrors closely the first term of the 
$\eps$-expansion of the two-loop sunrise integral 
(again normalised to a homogeneous solution of the appropriate Picard-Fuchs operator) \cite{Adams:2017ejb,Adams:2018yfj}
\bq
 \left[ 4 I\left(1,f_3;\tau\right) + 3 \mathrm{Cl}_2\left(\frac{2\pi}{3}\right)\right] \eps^2,
\eq
where $f_3$ is a modular form for $\Gamma_1(6)$ of modular weight $3$
and $I(1,f_3;\tau)$ is again an Eichler integral. $\mathrm{Cl}_2$ denotes the Clausen function.
We are interested in the higher-order terms in the dimensional regularisation parameter $\eps$ which add to eq.~(\ref{Eichler_integral_banana}).
This has been considered in ref.~\cite{Broedel:2021zij}, where it was shown that the higher-order terms are given
by iterated integrals of meromorphic modular forms.
The authors of this reference also gave a differential equation. 
However this differential equation is not $\eps$-factorised,
as it contains an additional $1/\eps$-term.
In this paper we improve the situation by deriving a differential equation in $\eps$-factorised form.

At first sight it seems surprising that the three-loop banana integral can be expressed in terms
of iterated integrals of (meromorphic) modular forms as the geometry is related to a Calabi-Yau $2$-fold and not
an elliptic curve.
However it has been known for a long time that the corresponding Picard-Fuchs operator (a third-order differential operator)
is the symmetric square of a second-order differential operator \cite{Verrill:1996,Joyce:1972}.
By a variable transformation, this second-order differential operator is related to the Picard-Fuchs operator
of the two-loop sunrise integral \cite{Bloch:2014qca,Broedel:2019kmn}.
We start with the dimensionless kinematic variable $x=p^2/m^2$.
The variable transformation maps the pseudo-threshold $x=4$ and the threshold $x=16$ of the three-loop banana integral
to the points $y=3$ and $y=-3$, respectively.
The variable $y$ is the natural ``physical'' variable in the sunrise context.
Mapped to the complex upper half-plane, these points correspond to $\tau=\frac{1}{4}(1+i\sqrt{3})$ and
$\tau = \frac{1}{6}(3+i\sqrt{3})$, respectively.
At these points the elliptic curve of the two-loop sunrise integral is not degenerate.
As the original differential equation for the three-loop banana integral has singularities at these points,
meromorphic modular forms (with poles at $\tau=\frac{1}{4}(1+i\sqrt{3})$ and $\tau = \frac{1}{6}(3+i\sqrt{3})$)
emerge naturally \cite{Matthes:1972,Broedel:2021zij}.
We show that the $\eps$-factorised differential equation
has at most simple poles at these points.

This paper is organised as follows:
In section~\ref{sect:defintions} we define the three-loop banana integral and the associated Picard-Fuchs operator.
In section~\ref{sect:master_integrals} we construct from an ansatz a set of master integrals which lead
to an $\eps$-factorised differential equation.
This ansatz follows closely the steps taken in ref.~\cite{Adams:2018yfj,Bogner:2019lfa,Muller:2022gec}.
In section~\ref{sect:periods} we define the periods for the elliptic curve of the two-loop sunrise integral.
From these periods we can construct a solution for the Picard-Fuchs operator of the three-loop banana integral.
In section~\ref{sect:modular} we introduce (meromorphic) modular forms.
We only need four (meromorphic) modular forms of modular weight $1$ as basic building blocks, which we label
\bq
 b_0,
 \;\;\;
 b_1,
 \;\;\;
 b_3,
 \;\;\;
 b_{-3}.
\eq
All other occurring (meromorphic) modular forms are polynomials in these.
In section~\ref{sect:iterated_integrals} we introduce the notation for iterated integrals
of (meromorphic) modular forms.
In section~\ref{sect:dgl} we present the differential equation for the master integrals in $\eps$-factorised form.
The differential equation contains six letters, which we denote as
\bq
 1,
 \;\;\;
 f_{2,a},
 \;\;\;
 f_{2,b},
 \;\;\;
 f_{4,a},
 \;\;\;
 f_{4,b},
 \;\;\;
 f_{6}.
\eq
The transformation laws of the entries of the differential equation under modular transformations of $\Gamma_1(6)$ are discussed in section~\ref{sect:modular_transformation}.
The differential equation is solved in section~\ref{sect:results} and analytical results for all master integrals up to order $\eps^4$
are presented.
In section~\ref{sect:numerics} we give numerical results and show that they agree with results obtained
from SecDec \cite{Carter:2010hi,Borowka:2017idc,Borowka:2018goh}.
Our conclusions are presented in section~\ref{sect:conclusions}.
In appendix~\ref{sect:boundary} we derive the boundary conditions necessary for solving the differential equation.
Appendix~\ref{sect:supplement} describes the supplementary electronic file attached to this article, which gives the 
solution for the banana integrals up to order $\eps^6$.

% -----------------------------------------------------------------------------

\section{Definitions}
\label{sect:defintions}

\begin{figure}
\begin{center}
\includegraphics[scale=1.0]{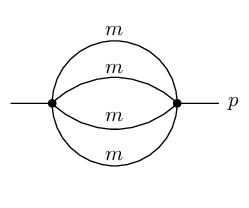}
\end{center}
\caption{
The three-loop banana graph.
}
\label{fig_banana_graph}
\end{figure}
We are interested in the integrals
\bq
\label{def_banana_loop_integral}
 I_{\nu_1 \nu_2 \nu_3 \nu_4}
 & = &
 e^{3 \gamma_E \eps}
 \left(m^2\right)^{\nu-\frac{3}{2}D}
 \int \left( \prod\limits_{a=1}^{4} \frac{d^Dk_a}{i \pi^{\frac{D}{2}}} \right)
 i \pi^{\frac{D}{2}} \delta^D\left(p-\sum\limits_{b=1}^{4} k_b \right)
 \left( \prod\limits_{c=1}^{4} \frac{1}{\left(-k_c^2+m^2\right)^{\nu_c}} \right)
\eq
where $D$ denotes the number of space-time dimensions, $\eps$ the dimensional regularisation parameter,
$\gamma_E$ the Euler-Mascheroni constant
and the quantity $\nu$ is defined by
\bq
 \nu & = &
 \sum\limits_{j=1}^4 \nu_j.
\eq
We consider these integrals in $D=2-2\eps$ space-time dimensions.
It is convenient to introduce the dimensionless variable
\bq
 x & = & \frac{p^2}{m^2}.
\eq
It is well-known that this family of Feynman integrals has four master integrals.
A possible choice for a basis of master integrals is
\bq
 I_{1110}, 
 \;
 I_{1111}, 
 \;
 I_{1112}, 
 \;
 I_{1113}.
\eq
The Feynman integral $I_{1110}$ is a product of three one-loop tadpole integrals
and rather simple.
We set
\bq
 I_1
 & = & 
 \eps^3 I_{1110}
 \; = \; 
 \left[ e^{\gamma_E \eps} \Gamma\left(1+\eps\right) \right]^3. 
\eq
The integral $I_1$ has uniform weight.
For $I_{1111}$ we have the inhomogeneous third-order differential equation
\bq
 L_3 \; I_{1111}
 & = &
 - \frac{24 \eps^3}{x^2\left(x-4\right)\left(x-16\right)} I_{1110},
\eq
with
\bq
 L_3 
 & = &
  \frac{d^3}{dx^3}
  + \left[ \frac{3}{x}
           + \frac{3\left(1+2\eps\right)}{2\left(x-4\right)}
           + \frac{3\left(1+2\eps\right)}{2\left(x-16\right)}
    \right] \frac{d^2}{dx^2}
 \nonumber \\
 & &
  + \left[ \frac{7x^2-68x+64}{x^2\left(x-4\right)\left(x-16\right)}
           + \frac{6 \eps \left(3x-20\right)}{x\left(x-4\right)\left(x-16\right)}
           + \frac{\eps^2 \left(11x+16\right)}{x^2\left(x-16\right)}
    \right] \frac{d}{dx}
 \nonumber \\
 & &
  + \left(1+2\eps\right)\left(1+3\eps\right)
    \left[ \frac{1}{x^2\left(x-16\right)} + \frac{\eps\left(x+2\right)}{x^2\left(x-4\right)\left(x-16\right)} \right].
\eq
We denote by $L_3^{(0)}$ the $\eps$-independent part of $L_3$ \cite{MullerStach:2012mp}:
\bq
 L_3^{(0)}
 = 
  \frac{d^3}{dx^3}
  + \left[ \frac{3}{x}
           + \frac{3}{2\left(x-4\right)}
           + \frac{3}{2\left(x-16\right)}
    \right] \frac{d^2}{dx^2}
  + \frac{7x^2-68x+64}{x^2\left(x-4\right)\left(x-16\right)} \frac{d}{dx}
  + \frac{1}{x^2\left(x-16\right)}.
\eq
Let
\bq
\label{eq_psi_L2}
 L_2^{(0)}
 & = &
  \frac{d^2}{dx^2}
 + \left[\frac{1}{x} + \frac{1}{2\left(x-4\right)} + \frac{1}{2\left(x-16\right)} \right] \frac{d}{dx}
 + \frac{\left(x-8\right)}{4x\left(x-4\right)\left(x-16\right)}.
\eq
The differential operator $L_3^{(0)}$ is the symmetric product of $L_2^{(0)}$ \cite{Verrill:1996}: 
If $\psi_1, \psi_2$ are two independent solutions of $L_2^{(0)}$
\bq
\label{eq_psi_L2_psi}
 L_2^{(0)} \; \psi_i & = & 0,
 \;\;\;\;\;\; i \; \in \; \{1,2\},
\eq
then the solution space of $L_3^{(0)}$ is spanned by
\bq
 \psi_1^2, \;\; \psi_1 \psi_2, \;\; \psi_2^2.
\eq
The Wronskian is defined by
\bq
 W
 & = &
 \psi_1 \frac{d}{dx} \psi_2
 -
 \psi_2 \frac{d}{dx} \psi_1.
\eq
We have
\bq
 \left[ \frac{d}{dx} + \frac{1}{x} + \frac{1}{2\left(x-4\right)} + \frac{1}{2\left(x-16\right)} \right] W & = & 0
\eq
and
\bq
\label{eq_Wronskian}
 W & = & \frac{2 \pi i c}{x \left(4-x\right)^{\frac{1}{2}} \left(16-x\right)^{\frac{1}{2}}},
\eq
where $c$ is a constant.
The constant $c$ depends on the normalisation of the two independent solutions $\psi_1$ and $\psi_2$.
In section~\ref{sect:periods} we give explicit expressions for the two independent solutions of eq.~(\ref{eq_psi_L2_psi}).
This choice of solutions yields $c=3$.

% ------------------------------------------------------------------------------------------
\section{Master integrals}
\label{sect:master_integrals}

In this section we determine a choice of master integrals, which put the differential equation
into an $\eps$-form.
We do this in two steps:
In the first step we start from a rather general ansatz with five unknown functions.
We derive differential equations these functions have to satisfy from the condition that the
differential equation for the master integrals is in $\eps$-form.
Four of these five functions are easily solved for, leading to a more specific ansatz, which 
we consider in step $2$. This specific ansatz involves only one unknown function.
We determine this remaining function in section~\ref{sect:master_integrals_step_2}.

\subsection{Step 1}
\label{sect:master_integrals_step_1}

Let $\omega_1$, $J$, $F_{32}$, $F_{42}$ and $F_{43}$ be (a priori unknown) functions of $x$.
Here, $J$ denotes the Jacobian for a (a priori unknown) change of variable from $x$ to $\tau$:
\bq
\label{def_Jacobian}
 J & = & \frac{dx}{d\tau}.
\eq
This implies
\bq
 \frac{d}{d\tau} & = & J \frac{d}{dx}.
\eq
We start from the following ansatz for the 
master integrals
\bq
 I_1
 & = &
 \eps^3 I_{1110},
 \nonumber \\
 I_2
 & = &
 \eps^3 \frac{\pi^2}{\omega_1} I_{1111},
 \nonumber \\
 I_3
 & = &
 \frac{1}{2\pi i \eps} \frac{d}{d\tau} I_2
 + F_{32} I_2,
 \nonumber \\
 I_4
 & = &
 \frac{1}{2\pi i \eps} \frac{d}{d\tau} I_3
 + F_{42} I_2 
 + F_{43} I_3.
\eq
The differential equation for this set of master integrals is in $\eps$-form
\bq
 d I & = & \eps A I,
\eq
provided the following set of six differential equations for the functions
$\omega_1$, $J$, $F_{32}$, $F_{42}$, $F_{43}$ hold:
For $\omega_1$ we require the two equations
\bq
\label{eq_omega_L3}
 & & L_3^{(0)} \; \omega_1 = 0,
 \\
\label{eq_omega_v1}
 & &
 \frac{1}{\omega_1} \frac{d^2\omega_1}{dx^2}
 - \frac{1}{2} \left( \frac{1}{\omega_1} \frac{d\omega_1}{dx} \right)^2
 + \frac{2\left(x^2-15x+32\right)}{x\left(x-4\right)\left(x-16\right)} \frac{1}{\omega_1 } \frac{d\omega_1}{dx}
 + \frac{\left(x-8\right)}{2x\left(x-4\right)\left(x-16\right)}
 = 
 0.
\eq
For $J$ we require
\bq
\label{eq_dgl_Jacobian}
 \frac{d \ln J}{dx}
 & = &
 \frac{d \ln \omega_1}{dx}
 +\frac{2\left(x^2-15x+32\right)}{x\left(x-4\right)\left(x-16\right)}.
\eq
For $F_{32}$, $F_{42}$, $F_{43}$ we require
\bq
\lefteqn{
 \frac{d^2F_{32}}{dx^2}
 + \left[ \frac{d \ln \omega_1}{dx} + \frac{2\left(x^2-15x+32\right)}{x\left(x-4\right)\left(x-16\right)} \right] \frac{dF_{32}}{dx}
 + \frac{3 J}{2\pi i} 
   \left[
           - \frac{\left(x-10\right)}{\left(x-4\right)\left(x-16\right)} \left( \frac{d \ln \omega_1}{dx} \right)^2
 \right. } & & \nonumber \\
 & & \left.
           - \frac{2\left(x^3-30x^2+228x-640\right)}{x\left(x-4\right)^2\left(x-16\right)^2} \frac{d \ln \omega_1}{dx}
           - \frac{\left(x^3-28x^2+168x-384\right)}{x^2\left(x-4\right)^2\left(x-16\right)^2}
   \right]
 = 
 0,
 \nonumber \\
\lefteqn{
 \frac{dF_{42}}{dx}
 - 3 F_{32} \frac{dF_{32}}{dx}
 + \frac{3J}{2\pi i} \frac{2\left(x-10\right)}{\left(x-4\right)\left(x-16\right)} \frac{dF_{32}}{dx}
 } & & \nonumber \\
 & &
 + \frac{3J}{2\pi i} \left[ 
                            \frac{2\left(x-10\right)}{\left(x-4\right)\left(x-16\right)} \frac{d \ln \omega_1}{dx}
                            + \frac{2\left(x^3-30x^2+228x-640\right)}{x\left(x-4\right)^2\left(x-16\right)^2}
                     \right] F_{32}
 \nonumber \\
 & &
 + \frac{J^2}{\left(2\pi i\right)^2} 
   \left[
          - \frac{\left(11x+16\right)}{x^2\left(x-16\right)} \frac{d \ln \omega_1}{dx}
          - \frac{\left(11x-14\right)}{x^2\left(x-4\right)\left(x-16\right)}
   \right]
 = 0,
 \\
\lefteqn{
 \frac{dF_{43}}{dx}
 + 2 \frac{dF_{32}}{dx}
 + \frac{3 J}{2\pi i} 
   \left[
           - \frac{2\left(x-10\right)}{\left(x-4\right)\left(x-16\right)} \frac{d \ln \omega_1}{dx}
           - \frac{2\left(x^3-30x^2+228x-640\right)}{x\left(x-4\right)^2\left(x-16\right)^2}
   \right]
 = 0.
} & & \nonumber
\eq
These differential equations follow from the requirement that terms of order $\eps^j$ with $j \neq 1$ are absent in the differential
equation for the master integrals:
Eq.~(\ref{eq_omega_L3}) removes the $\eps^{-2}$-term of $A_{4,2}$.
Eq.~(\ref{eq_dgl_Jacobian}) removes the $\eps^{0}$-term of $A_{4,4}$.
Eq.~(\ref{eq_omega_v1}) (together with the previous two equations) removes the $\eps^{-1}$-term of $A_{4,3}$.
$F_{32}$ removes the $\eps^{-1}$-term of $A_{4,2}$,
$F_{42}$ removes the $\eps^{0}$-term of $A_{4,2}$ and
$F_{43}$ removes the $\eps^{0}$-term of $A_{4,3}$.

With
\bq
 \frac{d^2 \ln \omega_1}{dx^2}
 & = &
 \frac{1}{\omega_1} \frac{d^2\omega_1}{dx^2}
 - \left( \frac{1}{\omega_1} \frac{d\omega_1}{dx} \right)^2
\eq
we may write eq.~(\ref{eq_omega_v1}) alternatively as
\bq
\label{eq_omega_v2}
 \frac{d^2 \ln \omega_1}{dx^2}
 + \frac{1}{2} \left( \frac{d \ln \omega_1}{dx} \right)^2
 + \frac{2\left(x^2-15x+32\right)}{x\left(x-4\right)\left(x-16\right)} \frac{d \ln \omega_1}{dx}
 + \frac{\left(x-8\right)}{2x\left(x-4\right)\left(x-16\right)}
 & = &
 0.
\eq
We have to check that eq.~(\ref{eq_omega_L3}) and eq.~(\ref{eq_omega_v1}) together have a solution.
The following lemma ensures this:
\begin{lemma}
Assume that $\omega_1$ satisfies eq.~(\ref{eq_omega_v1}). Then
\bq
 L_3^{(0)} \; \omega_1 & = & 0.
\eq
\end{lemma}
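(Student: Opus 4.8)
The plan is to use the fact, recorded just above the statement, that $L_3^{(0)}$ is the symmetric square of $L_2^{(0)}$: every square $\psi^2$ of a solution of $L_2^{(0)}$ lies in the solution space of $L_3^{(0)}$. It therefore suffices to show that any $\omega_1$ obeying eq.~(\ref{eq_omega_v1}) has the form $\omega_1=\psi^2$ with $L_2^{(0)}\psi=0$. As a preliminary step I would identify the coefficients in the logarithmic form eq.~(\ref{eq_omega_v2}) with those of $L_2^{(0)}$. A short partial-fraction computation gives
\bq
 \frac{1}{x}+\frac{1}{2\left(x-4\right)}+\frac{1}{2\left(x-16\right)}
 & = &
 \frac{2\left(x^2-15x+32\right)}{x\left(x-4\right)\left(x-16\right)},
\eq
so that the coefficient of $\frac{d\ln\omega_1}{dx}$ in eq.~(\ref{eq_omega_v2}) equals the first-order coefficient of $L_2^{(0)}$, while the remaining term $\frac{\left(x-8\right)}{2x\left(x-4\right)\left(x-16\right)}$ equals exactly twice the zeroth-order coefficient of $L_2^{(0)}$.

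Writing $p$ and $q$ for the first- and zeroth-order coefficients of $L_2^{(0)}$, eq.~(\ref{eq_omega_v2}) then reads
\bq
\label{plan_riccati}
 \frac{d^2\ln\omega_1}{dx^2}
 +\frac{1}{2}\left(\frac{d\ln\omega_1}{dx}\right)^2
 +p\,\frac{d\ln\omega_1}{dx}
 +2q
 & = &
 0.
\eq
This is a Riccati-type equation, and the key observation is that it linearises under $\omega_1=\psi^2$. Working locally on the open set where $\omega_1\neq0$ (which is needed for $\ln\omega_1$ to be defined at all), set $\psi=\omega_1^{1/2}$, so that $\frac{d\ln\omega_1}{dx}=2\frac{\psi'}{\psi}$. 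Using $\frac{\psi''}{\psi}=\left(\frac{\psi'}{\psi}\right)'+\left(\frac{\psi'}{\psi}\right)^2$, the left-hand side of eq.~(\ref{plan_riccati}) collapses to $2\frac{\psi''}{\psi}+2p\frac{\psi'}{\psi}+2q=\frac{2}{\psi}\,L_2^{(0)}\psi$. Hence eq.~(\ref{eq_omega_v1}) is equivalent to $L_2^{(0)}\psi=0$, and every solution $\omega_1$ is the square of a solution of $L_2^{(0)}$.

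To finish I would invoke the symmetric-square structure. Choosing a basis $\psi_1,\psi_2$ of solutions of $L_2^{(0)}$ and writing $\psi=a\psi_1+b\psi_2$, one has $\omega_1=\psi^2=a^2\psi_1^2+2ab\,\psi_1\psi_2+b^2\psi_2^2$, a linear combination of $\psi_1^2,\psi_1\psi_2,\psi_2^2$, which span the solution space of $L_3^{(0)}$. Therefore $L_3^{(0)}\omega_1=0$, as claimed.

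I expect the only genuine labour to be the coefficient-matching algebra and the mild care required in taking the square root on the locus $\omega_1\neq0$; the substantive point---that eq.~(\ref{eq_omega_v1}) is precisely the Riccati equation linearised by $\omega_1=\psi^2$---is short once spotted. A brute-force alternative would expand $L_3^{(0)}\omega_1$ directly in the derivatives of $\ln\omega_1$, use eq.~(\ref{eq_omega_v2}) and its $x$-derivative to eliminate $\frac{d^3\ln\omega_1}{dx^3}$, and verify that the remaining rational-function identity vanishes; this works too, but it is exactly the lengthy manipulation that the substitution avoids, and would be the main obstacle along that route.
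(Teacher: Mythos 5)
Your proof is correct, but it takes a genuinely different route from the paper's. The paper proves the lemma by direct verification: one uses eq.~(\ref{eq_omega_v1}) (and its $x$-derivative) to eliminate $\omega_1''$ and $\omega_1'''$ from $L_3^{(0)}\omega_1$ and checks that the resulting rational expression vanishes identically --- precisely the ``brute-force alternative'' you mention at the end. Your argument instead linearises eq.~(\ref{eq_omega_v2}) via $\omega_1=\psi^2$, concludes that every (local, non-vanishing) solution of eq.~(\ref{eq_omega_v1}) is the square of a solution of $L_2^{(0)}$, and then invokes the symmetric-square property of $L_3^{(0)}$. Your coefficient matching is right: $2(x^2-15x+32)/[x(x-4)(x-16)]$ is the first-order coefficient of $L_2^{(0)}$ and $(x-8)/[2x(x-4)(x-16)]$ is twice its zeroth-order coefficient, and the identity reducing the left-hand side of eq.~(\ref{eq_omega_v2}) to $\tfrac{2}{\psi}L_2^{(0)}\psi$ is a one-line computation. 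What your route buys: it explains structurally \emph{why} the lemma holds, and it simultaneously delivers the converse of the paper's second lemma (every solution of eq.~(\ref{eq_omega_v1}) is a perfect square $(a\psi_1+b\psi_2)^2$), a fact the paper needs shortly afterwards but only argues implicitly. What it costs: you must import the symmetric-square statement (cited in the paper to Verrill) as an external input and take mild care with branches of $\omega_1^{1/2}$ away from zeros of $\omega_1$, whereas the paper's verification is self-contained algebra that needs neither.
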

\begin{proof}
To prove this lemma, one verifies that the left-hand side of eq.~(\ref{eq_omega_L3}) simplifies to zero with the relation
eq.~(\ref{eq_omega_v1}).
\end{proof}
Thus it is sufficient to just consider the second-order non-linear inhomogeneous differential equation
eq.~(\ref{eq_omega_v1}) (or equivalently eq.~(\ref{eq_omega_v2})) and ignore eq.~(\ref{eq_omega_L3}).

It is not too difficult to solve eq.~(\ref{eq_omega_v2}):
We know that any solution of eq.~(\ref{eq_omega_v2})
is automatically a solution of eq.~(\ref{eq_omega_L3}). The solution space of eq.~(\ref{eq_omega_L3})
is given by
\bq
\label{linear_combination_omega_1}
 c_1 \psi_1^2 + c_2 \psi_1 \psi_2 + c_3 \psi_2^2
\eq
with unknown constants $c_1, c_2, c_3$.
\begin{lemma}
Let $\psi$ be a solution of eq.~(\ref{eq_psi_L2_psi}). Then $\psi^2$ is a solution of eq.~(\ref{eq_omega_v2}).
\end{lemma}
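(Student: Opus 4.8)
The plan is to substitute $\omega_1 = \psi^2$ directly into eq.~(\ref{eq_omega_v2}) and reduce its vanishing to the linear equation eq.~(\ref{eq_psi_L2_psi}). Since the entire content of eq.~(\ref{eq_omega_v2}) is carried by the logarithmic derivative $\frac{d \ln \omega_1}{dx}$, the natural first step is to compute this quantity for $\omega_1 = \psi^2$. One finds $\frac{d \ln \omega_1}{dx} = 2 \frac{\psi'}{\psi}$, and differentiating once more, $\frac{d^2 \ln \omega_1}{dx^2} = 2\frac{\psi''}{\psi} - 2\left(\frac{\psi'}{\psi}\right)^2$.

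Next I would insert these two expressions into the left-hand side of eq.~(\ref{eq_omega_v2}). The key simplification is that the two quadratic contributions cancel: the $-2(\psi'/\psi)^2$ coming from $d^2 \ln\omega_1/dx^2$ is exactly cancelled by the $+2(\psi'/\psi)^2$ produced by the $\frac{1}{2}(d\ln\omega_1/dx)^2$ term. This is precisely why squaring a solution of a second-order operator works — it linearises the Riccati-type nonlinearity. What survives is $2\frac{\psi''}{\psi} + 2 p(x)\frac{\psi'}{\psi} + q(x)$, where $p(x) = \frac{2(x^2-15x+32)}{x(x-4)(x-16)}$ and $q(x) = \frac{(x-8)}{2x(x-4)(x-16)}$ are the coefficients appearing in eq.~(\ref{eq_omega_v2}).

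Finally I would multiply through by $\psi/2$, reducing the vanishing of eq.~(\ref{eq_omega_v2}) to the statement $\psi'' + p(x)\psi' + \frac{1}{2}q(x)\psi = 0$, and recognise this as $L_2^{(0)}\psi = 0$. This is a matter of matching coefficients against eq.~(\ref{eq_psi_L2}): I would verify by partial fractions that $p(x) = \frac{1}{x} + \frac{1}{2(x-4)} + \frac{1}{2(x-16)}$, both sides reducing to $\frac{2x^2-30x+64}{x(x-4)(x-16)}$, and that $\frac{1}{2}q(x) = \frac{x-8}{4x(x-4)(x-16)}$, which is immediate. Once these two identities are confirmed, the surviving expression is exactly $L_2^{(0)}\psi$, which vanishes by hypothesis, and the lemma follows.

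I do not expect any genuine obstacle here; the computation is entirely elementary and mirrors the verification used in the preceding lemma. The only point requiring care is the bookkeeping in matching the rational coefficients $p(x)$ and $\frac{1}{2}q(x)$ to those of $L_2^{(0)}$, but this is a short partial-fraction check rather than a conceptual difficulty.
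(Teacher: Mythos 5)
Your proof is correct and follows the same route as the paper, which simply states that one verifies the left-hand side of eq.~(\ref{eq_omega_v2}) vanishes upon using eq.~(\ref{eq_psi_L2_psi}); you have merely written out the verification explicitly, including the cancellation of the quadratic $(\psi'/\psi)^2$ terms and the partial-fraction match with $L_2^{(0)}$, both of which check out.
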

\begin{proof}
To prove this lemma, one verifies that the left-hand side of eq.~(\ref{eq_omega_v2}) simplifies to zero with the relation
eq.~(\ref{eq_psi_L2_psi}).
\end{proof}
In other words, $\omega_1$ needs to be a perfect square and of the form given in eq.~(\ref{linear_combination_omega_1}).
Thus,
\bq
 \psi_1^2, \;\; \psi_2^2
\eq
are solutions of eq.~(\ref{eq_omega_v2}), while $\psi_1 \psi_2$ is not.

From now on we set 
\bq
 \omega_1 & = & \psi_1^2.
\eq
Let us now turn to the Jacobian $J$.
It is easier to work with
\bq
 J^{-1} & = & \frac{d\tau}{dx}.
\eq
We may rewrite eq.~(\ref{eq_dgl_Jacobian}) as
\bq
\label{eq_dgl_Jacobian_v2}
 \frac{d \ln J^{-1}}{dx}
 + \frac{d \ln \omega_1}{dx}
 +\frac{2\left(x^2-15x+32\right)}{x\left(x-4\right)\left(x-16\right)}
 & = &
 0.
\eq
It is easily verified that
\bq
 \tau & = & \frac{\psi_2}{\psi_1}
\eq
solves eq.~(\ref{eq_dgl_Jacobian_v2}).
We then have
\bq
 J^{-1} & =& \frac{W}{\psi_1^2}.
\eq

% ------------------------------------------------------------------------------------------
\subsection{Step 2}
\label{sect:master_integrals_step_2}

With the information gathered so far
\bq
 \omega_1 \; = \; \psi_1^2, & & J \; = \; \frac{\psi_1^2}{W},
\eq
we may clean-up our ansatz.
The differential equations for $F_{42}$ and $F_{43}$ are of first order and easily solved.
Our ansatz reduces to
\bq
 I_1
 & = &
 \eps^3 I_{1110},
 \nonumber \\
 I_2
 & = &
 \eps^3 \frac{\pi^2}{\psi_1^2} I_{1111},
 \nonumber \\
 I_3
 & = &
 \frac{1}{2\pi i \eps} \frac{d}{d\tau} I_2
 + \left[ F_2 - \frac{\pi i \left(x-10\right)}{\left(x-4\right)\left(x-16\right)W} \left( \frac{\psi_1}{\pi} \right)^2 \right] I_2,
 \nonumber \\
 I_4
 & = &
 \frac{1}{2\pi i \eps} \frac{d}{d\tau} I_3
 + \left[ \frac{3}{2} F_2^2 + \frac{\pi^2 \left(x+8\right)^2\left(x^2-8x+64\right)}{8 x^2 \left(x-4\right)^2 \left(x-16\right)^2 W^2 } \left( \frac{\psi_1}{\pi} \right)^4 \right] I_2
 \nonumber \\
 & & 
 + \left[ - 2 F_2 - \frac{\pi i \left(x-10\right)}{\left(x-4\right)\left(x-16\right)W} \left( \frac{\psi_1}{\pi} \right)^2 \right] I_3
\eq
with one unknown function $F_2$, which satisfies
\bq
 \frac{d^2F_2}{dx^2} 
 + \left[\frac{2\left(x^2-15x+32\right)}{x\left(x-4\right)\left(x-16\right)} + 2 \left( \frac{d\ln \psi_1}{dx}\right) \right] \frac{dF_2}{dx}
 & = &
 \frac{\pi i\left(x-8\right)\left(x+8\right)^3}{x^2 \left(x-4\right)^3 \left(x-16\right)^3 W} \left( \frac{\psi_1}{\pi} \right)^{2}.
\eq
This is a linear inhomogeneous first-order differential equation for $F_2'$.
Noting that
\bq
 \frac{2\left(x^2-15x+32\right)}{x\left(x-4\right)\left(x-16\right)}
 + 2 \left(\frac{d \ln \psi_1}{dx} \right)
 & = &
 \frac{d \ln J}{dx}
\eq
one finds
\bq
 F_2' & = &
 J^{-1} \left[ C_2 
 + \int dx \; J 
 \frac{\pi i\left(x-8\right)\left(x+8\right)^3}{x^2 \left(x-4\right)^3 \left(x-16\right)^3 W} \left( \frac{\psi_1}{\pi} \right)^{2}
 \right]
\eq
and
\bq
 F_2
 & = &
 C_1
 +
 C_2 \tau
 +
 \left(2\pi i\right)^2
 \int d\tau \int d\tau \;  
 \frac{x\left(x-8\right)\left(x+8\right)^3}{864 \left(4-x\right)^{\frac{3}{2}} \left(16-x\right)^{\frac{3}{2}}} \left( \frac{\psi_1}{\pi} \right)^{6}.
\eq
We only need one specific solution. We may therefore set the integration constants to zero.
This yields
\bq
\label{def_F2}
 F_2
 & = &
 \left(2\pi i\right)^2
 \int\limits_{i\infty}^\tau d\tau_1 \int\limits_{i\infty}^{\tau_1} d\tau_2 \;  
 \frac{x\left(x-8\right)\left(x+8\right)^3}{864 \left(4-x\right)^{\frac{3}{2}} \left(16-x\right)^{\frac{3}{2}}} \left( \frac{\psi_1}{\pi} \right)^{6},
\eq
where the integrand is viewed as a function of $\tau_2$.
We denote the integrand by
\bq
\label{def_g6}
 g_6
 & = &
 \frac{x\left(x-8\right)\left(x+8\right)^3}{864 \left(4-x\right)^{\frac{3}{2}} \left(16-x\right)^{\frac{3}{2}}} \left( \frac{\psi_1}{\pi} \right)^{6}.
\eq
We note that the definition of $F_2$ depends on the integration path.
Later on we will expand the integrand in a $\qbar$-series. 
As long as we stay inside the region of convergence
of the $\qbar$-series we may suppress the dependence on the integration path.
We are only interested in this case and we therefore do not write the dependence on the integration path explicitly.
The dependence on the integration path is relevant as soon as we analytically continue $F_2$ beyond the region
of convergence of the $\qbar$-series.
Later on we will see that the radius of convergence of the $\qbar$-series is set by the threshold singularity at $x=16$,
corresponding to $\tau=1/2+i\sqrt{3}/6$.
As long as
\bq
 \mathrm{Im}\; \tau \; > \; \frac{\sqrt{3}}{6}
 & \;\;\; \mbox{or} \;\;\; &
 \left| \qbar \right| < e^{-\frac{\pi}{3}\sqrt{3}}
\eq
we do not have to worry about the path dependence.

% ------------------------------------------------------------------------------------------
\section{The periods}
\label{sect:periods}

In this section we consider the two solutions $\psi_1$ and $\psi_2$ of eq.~(\ref{eq_psi_L2_psi}).
It is well-known that these solutions are related to the two periods $\psi_1^{\mathrm{sunrise}}$ and $\psi_2^{\mathrm{sunrise}}$
of the sunrise integral.
In this section we review the construction.

The first step is the variable transformation
\bq
\label{def_trafo_x_y}
 x \; = \; - \frac{\left(y-1\right)\left(y-9\right)}{y},
 & &
 y \; = \; \frac{1}{2}\left[ 10-x-\sqrt{4-x}\sqrt{16-x}\right].
\eq
In expressing $y$ as a function of $x$, we made a choice for the sign of the square root.
With this choice the value $x=0$ is mapped to $y=1$.
This transformation rationalises the square root
\bq
\label{rationalise_sqrt}
 \sqrt{4-x}\sqrt{16-x}
 & = &
 - \frac{\left(y+3\right)\left(y-3\right)}{y}.
\eq
The sign is fixed by eq.~(\ref{def_trafo_x_y}).
This is most easily seen as follows: In the Euclidean region (i.e. $x\in]-\infty,0]$ and $y\in[0,1]$)
both sides of eq,~(\ref{rationalise_sqrt}) are positive.

In terms of the variable $y$, eq.~(\ref{eq_psi_L2_psi}) transforms into
\bq
 \left[ \frac{d^2}{dy^2}
 + \left( \frac{1}{y-1} + \frac{1}{y-9} \right) \frac{d}{dy}
 + \frac{y^2-2y+9}{4y^2\left(y-1\right)\left(y-9\right)} \right] \psi_i & = & 0.
\eq
If we now make the ansatz
\bq
 \psi_i & = & \sqrt{y} \; \psi_i^{\mathrm{sunrise}}
\eq
we find for $\psi_i^{\mathrm{sunrise}}$ the differential equation
\bq
 \left[ \frac{d^2}{dy^2}
 + \left( \frac{1}{y} + \frac{1}{y-1} + \frac{1}{y-9} \right) \frac{d}{dy}
 + \frac{y-3}{y\left(y-1\right)\left(y-9\right)} \right] \psi_i^{\mathrm{sunrise}} & = & 0.
\eq
This is the differential equation for the periods of the sunrise integral.
The solutions are well-known:
We consider an elliptic curve defined by the quartic polynomial
\bq
\label{def_generic_quartic_elliptic_curve}
 E
 & : &
 v^2 - \left(u-u_1\right) \left(u-u_2\right) \left(u-u_3\right) \left(u-u_4\right)
 \; = \; 0,
\eq
where the $u_j$ (with $j\in\{1,2,3,4\}$) denote the roots of the quartic polynomial.
The roots $u_j$ are given by
\bq
 u_1
 \; = \;
 -4,
 \;\;\;\;\;\;
 u_2
 \; = \;
 -\left(1+\sqrt{y}\right)^2,
 \;\;\;\;\;\;
 u_3
 \; = \;
 -\left(1-\sqrt{y}\right)^2,
 \;\;\;\;\;\;
 u_4
 \; = \;
 0.
\eq
We set
\bq
 U_1 \; = \; \left(u_3-u_2\right)\left(u_4-u_1\right),
 \;\;\;\;\;\;
 U_2 \; = \; \left(u_2-u_1\right)\left(u_4-u_3\right),
 \;\;\;\;\;\;
 U_3 \; = \; \left(u_3-u_1\right)\left(u_4-u_2\right).
\eq 
We define the modulus and the complementary modulus of the elliptic curve $E$ by
\bq
 k^2 
 \; = \; 
 \frac{U_1}{U_3},
 & &
 \bar{k}^2 
 \; = \;
 1 - k^2 
 \; = \;
 \frac{U_2}{U_3}.
\eq
Our standard choice for the periods and quasi-periods (for $x \in {\mathbb R}+i\delta$) is
\bq
\label{def_generic_periods}
 \left( \begin{array}{c}
 \psi_2^{\mathrm{sunrise}} \\
 \psi_1^{\mathrm{sunrise}} \\
 \end{array} \right)
 \; = \;
  \frac{4}{U_3^{\frac{1}{2}}}
 \gamma
 \left( \begin{array}{c}
 i K\left(\bar{k}\right) \\
 K\left(k\right) \\
 \end{array} \right),
 & &
 \left( \begin{array}{c}
 \phi_2^{\mathrm{sunrise}} \\
 \phi_1^{\mathrm{sunrise}} \\
 \end{array} \right)
 \; = \;
  \frac{4}{U_3^{\frac{1}{2}}}
 \gamma
 \left( \begin{array}{c}
 i E\left(\bar{k}\right)\\
 K\left(k\right) - E\left(k\right) \\
 \end{array} \right),
\eq
with
\bq
\label{def_gamma}
 \gamma
 & = &
 \left\{ \begin{array}{ll}
  \left(\begin{array}{cc}
   1 & 0 \\
   0 & 1 \\
  \end{array} \right) 
  & \mbox{if} \;\;\; x < 0 \;\;\; \mbox{or} \;\;\; 16+8\sqrt{3} < x,
  \\
  \left(\begin{array}{cc}
   1 & 0 \\
   2 & 1 \\
  \end{array} \right) 
  & \mbox{if} \;\;\; 0 < x \le 16+8\sqrt{3}.
  \\
 \end{array} \right. 
\eq
For $x \in [0,4]$ the modulus $k$ is real and $k>1$. Feynman's $i\delta$-prescription dictates that $K(k)$ and $E(k)$
are evaluated as $K(k+i\delta)$ and $E(k+i\delta)$.
The origin of the matrix $\gamma$ is explained below.
We have
\bq
\label{def_tau}
 \tau & = & \frac{\psi_2}{\psi_1}
 \; = \; 
 \frac{\psi_2^{\mathrm{sunrise}}}{\psi_1^{\mathrm{sunrise}}}.
\eq
We further set
\bq
 \qbar & = & e^{2 \pi i \tau}.
\eq
We may now determine the constant $c$ appearing in eq.~(\ref{eq_Wronskian}).
We have
\bq
 W^{\mathrm{sunrise}}
 & = &
 \psi_1^{\mathrm{sunrise}} \frac{d}{dy} \psi_2^{\mathrm{sunrise}}
 -
 \psi_2^{\mathrm{sunrise}} \frac{d}{dy} \psi_1^{\mathrm{sunrise}}
 \; = \;
 - \frac{6 \pi i}{y\left(1-y\right)\left(9-y\right)}
\eq
and
\bq
 W & = & 
 \frac{6 \pi i}{x \left(4-x\right)^{\frac{1}{2}} \left(16-x\right)^{\frac{1}{2}}},
\eq
hence $c=3$.

We may express $x$ and $y$ as a function of $\tau$ with the help of the following relations \cite{Verrill:1996,Maier:2006aa}:
\bq
\label{def_hauptmodul}
 x
 \; = \;
 -
 \left( \frac{\eta\left(\tau\right) \eta\left(3\tau\right)}
             {\eta\left(2\tau\right) \eta\left(6\tau\right)} \right)^6,
 & &
 y
 \; = \;
 9
 \frac{\eta\left(\tau\right)^4 \eta\left(6\tau\right)^8}
      {\eta\left(3\tau\right)^4 \eta\left(2\tau\right)^8}.
\eq
We have for example
\bq
 -\frac{1}{x} 
 & = &
 \qbar
 + 6 \qbar^2
 + 21 \qbar^3
 + 68 \qbar^4
 + 198 \qbar^5
 + {\mathcal O}\left(\qbar^6\right).
\eq
It is worth discussing the mapping in eq.~(\ref{def_trafo_x_y}) and the origin of the matrix $\gamma$ in eq.~(\ref{def_gamma}) in more detail:
We are interested in
\bq
 x & \in & {\mathbb R} + i \delta,
 \;\;\;\;\;\; \delta \; > \; 0,
\eq
where $i \delta$ denotes an infinitesimal imaginary part 
originating from Feynman's $i\delta$-prescription.
\begin{table}
\begin{center}
\begin{tabular}{|l|c|c|c|c|c|}
 \hline
 $x$ & $-\infty$ & $0$ & $4$ & $16$ & $\infty$ \\
 \hline
 $y$ & $0$ & $1$ & $3$ & $-3$ & $0$ \\
 \hline
 $\tau$ & $i\infty$ & $0$ & $\frac{1}{4}+\frac{i\sqrt{3}}{12}$ & $\frac{1}{2}+\frac{i\sqrt{3}}{6}$ & $i\infty$ \\
 \hline
 $\qbar$ & $0$ & $1$ & $ie^{-\frac{\pi}{6}\sqrt{3}}$ & $-e^{-\frac{\pi}{3}\sqrt{3}}$ & $0$ \\
 \hline
\end{tabular}
\end{center}
\caption{
Correspondence between special values in $x$-space, $y$-space, $\tau$-space and $\qbar$-space.
}
\label{table_x_y}
\end{table}
The point $x=-\infty+i\delta$ is mapped to $y=0$.
As we continue in $x$-space along the real line towards $x=0$, we traverse the interval $[0,1]$
in $y$-space.
Continuing in $x$-space to the point $x=4$, we reach the point $y=3$ in $y$-space.
The interval $[4,16]$ in $x$-space corresponds to a half-circle in the complex upper half-plane,
starting at $y=3$ and ending at $y=-3$.
Finally, the interval $[16,\infty[$ in $x$-space brings us back from $y=-3$ to $y=0$ in $y$-space.
\begin{figure}
\begin{center}
\includegraphics[scale=1.0]{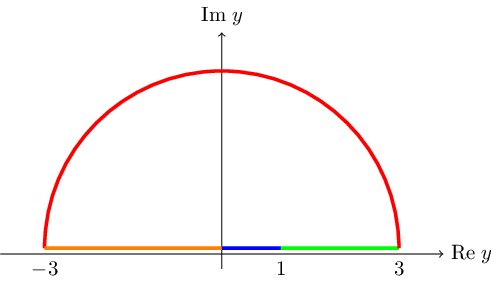}
\end{center}
\caption{
The path in $y$-space.
The interval $y \in [-3,0]$ corresponds to $x\in [16,\infty[$,
the interval $y \in [0,1]$ corresponds to $x\in ]-\infty,0]$,
the interval $y \in [1,3]$ corresponds to $x\in [0,4]$,
the half-circle corresponds to $x\in[4,16]$.
}
\label{fig_contour_y}
\end{figure}
This is summarised in table~\ref{table_x_y} and shown in fig.~\ref{fig_contour_y}.
We note that the cusps of $\Gamma_1(6)$ at $y=9$ and $y=\infty$ are never reached along this path.
The point $\qbar=0$ corresponds to $y=0$ and $x=\infty$.

We may now follow this path in $k^2$-space.
\begin{figure}
\begin{center}
\includegraphics[scale=1.0]{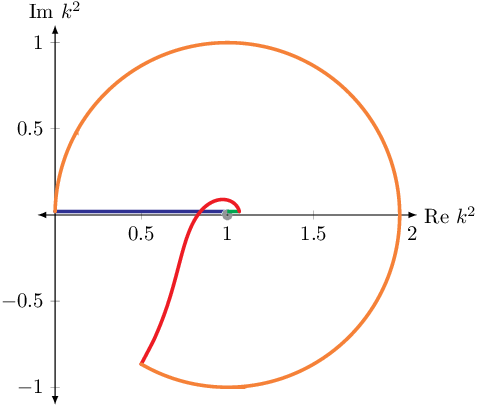}
\end{center}
\caption{
The path in $k^2$-space.
The path wraps in a small circle around the point $k^2=1$. 
}
\label{fig_contour_k2}
\end{figure}
The resulting contour is shown in fig.~\ref{fig_contour_k2}.
For $x=-\infty$ we start with $k^2=0$. As we increase $x$, we continue towards $k^2=1$.
At $x=0$ the path winds around $k^2=1$ in a small circle in the clockwise direction \cite{Bogner:2017vim,Honemann:2018mrb}.
In particular, for $k^2=1+\delta$ (with $\delta$ infinitesimal) the path crosses the branch cut of the
complete elliptic integrals.
The contour continues as shown in fig.~\ref{fig_contour_k2} and crosses the branch cut of the complete elliptic integrals
a second time at $k^2=2$, corresponding to $x=16+8\sqrt{3}$.
The periods $\psi_i^{\mathrm{sunrise}}$ and pseudo-periods $\phi_i^{\mathrm{sunrise}}$
are continuous functions of $x$.
The matrix $\gamma$ in eq.~(\ref{def_gamma}) compensates the discontinuity across the branch cut of the complete elliptic integrals and ensures that the periods $\psi_i^{\mathrm{sunrise}}$ and pseudo-periods $\phi_i^{\mathrm{sunrise}}$
are continuous.
With $k$ real and $k>1$ the discontinuities are given by
\bq
 K\left(k+i\delta\right) - K\left(k-i\delta\right) 
 & = &
 2 i K\left(\pm i \sqrt{k^2-1}\right),
 \nonumber \\
 \left[ K\left(k+i\delta\right) - E\left(k+i\delta\right) \right] - \left[ K\left(k-i\delta\right) - E\left(k-i\delta\right) \right] 
 & = &
 2 i E\left(\pm i \sqrt{k^2-1}\right).
\eq
We always have $\mathrm{Re}(k^2)>0$. This implies that the path for $\bar{k}^2=1-k^2$ never crosses
the branch cut of the complete elliptic integrals $K(\bar{k})$ and $E(\bar{k})$.
Hence, branch cut crossings do not occur for $\psi_2^{\mathrm{sunrise}}$ and $\phi_2^{\mathrm{sunrise}}$.

We may then look at the path in $\tau$-space and $\qbar$-space.
\begin{figure}
\begin{center}
\includegraphics[scale=1.0]{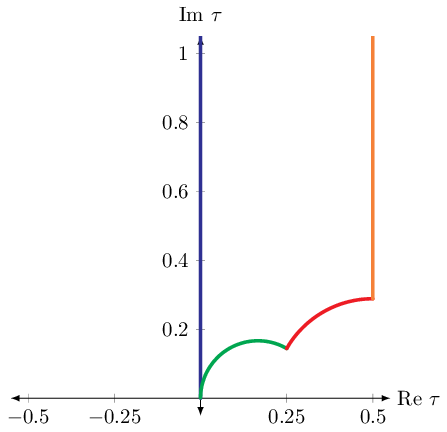}
\includegraphics[scale=1.0]{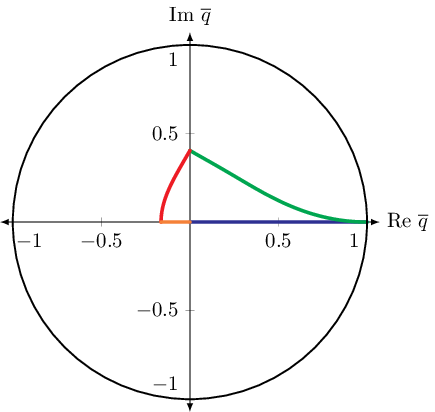}
\end{center}
\caption{
The path in $\tau$-space and in $\qbar$-space.
}
\label{fig_contour_tau_qbar}
\end{figure}
This is shown in fig.~\ref{fig_contour_tau_qbar}.
In the end we will expand in the variable $\qbar$ around $\qbar=0$. 
The radius of convergence is determined by the nearest pole, which in our case is located at
\bq
 \qbar = - e^{-\frac{\pi}{3}\sqrt{3}},
\eq
corresponding to $x=16$.
Hence, the expansion converges for
\bq
 x \in \left] -\infty, -2 \right[
 & \mbox{and} &
 x \in \left]16,\infty\right[.
\eq
The value $x=-2$ derives from the fact that
\bq
 \left| \qbar\left(x=-2\right)\right|
 \; = \;
 \left| \qbar\left(x=16\right)\right|
 \; = \;
 e^{-\frac{\pi}{3}\sqrt{3}}.
\eq

% ------------------------------------------------------------------------------------------
\section{Modular forms}
\label{sect:modular}

We introduce a basis $\{e_{1},e_{2}\}$ for the modular forms of modular weight $1$ 
for the Eisenstein subspace ${\mathcal E}_1(\Gamma_1(6))$:
\bq
 e_{1} \; = \; E_1\left(\tau;\chi_{1},\chi_{-3}\right),
 & &
 e_{2} \; = \; E_1\left(2\tau;\chi_{1},\chi_{-3}\right),
\eq
where 
\bq
 \chi_{1}\left(n\right) \; = \; \left( \frac{1}{n} \right)
 & \mbox{and} & 
 \chi_{-3}\left(n\right) \; = \; \left( \frac{-3}{n} \right)
\eq 
denote primitive Dirichlet characters with conductors $1$ and $3$, respectively.
Here, $(\frac{a}{n})$ denotes the Kronecker symbol in number theory.
The generalised Eisenstein series $E_k(\tau;\chi_a,\chi_b)$ are defined as follows \cite{Stein}, see also \cite{Weinzierl:2022eaz}:
Let $\chi_a$ and $\chi_b$ be primitive Dirichlet characters with conductors $d_a$ and $d_b$, respectively.
Then $E_k(\tau;\chi_a,\chi_b)$ is defined by 
\begin{align}
\label{def_Eisenstein_E}
E_{k}\left(\tau;\chi_a,\chi_b\right) &
 = 
 a_0 + \sum\limits_{n=1}^{\infty} \left( \sum\limits_{d|n} \chi_a(n/d) \cdot \chi_b(d) \cdot d^{k-1} \right) \qbar^{n}.
\end{align}
The normalisation is such that the coefficient of $\qbar$ is one.
The constant term $a_0$ is given by
\begin{align}
 a_0 &
 = 
 \begin{cases}
 -\frac{B_{k,\chi_b}}{2k}, \qquad &\text{if}\ d_a=1, \\
0, \qquad &\text{if}\ d_a>1.
\end{cases}
\end{align}
Note that the constant term $a_0$ depends on $\chi_a$ and $\chi_b$.
The generalised Bernoulli numbers
$B_{k,\chi_b}$ are defined by
\bq
\label{def_generalised_Bernoulli}
 \sum\limits_{n=1}^{d_b} \chi_b(n) \dfrac{xe^{nx}}{e^{d_b x}-1}
 & = &
 \sum\limits_{k=0}^{\infty} B_{k,\chi_b} \dfrac{x^k}{k!}. 
\eq
Instead of the basis $\{e_1,e_2\}$ we may use an alternative basis $\{b_0,b_1\}$ of ${\mathcal E}_1(\Gamma_1(6))$.
The latter basis is slightly more convenient for the conversion towards modular forms.
The basis $\{b_0,b_1\}$ is defined by
\bq
\label{def_b0_b1}
 b_0
 \; = \;
 \frac{\psi_1^{\mathrm{sunrise}}}{\pi}
 \; = \;
 2 \sqrt{3}
 \left( e_{1} + e_{2} \right),
 & &
 b_1
 \; = \;
 y \frac{\psi_1^{\mathrm{sunrise}}}{\pi} 
 \; = \; 
 6 \sqrt{3}
 \left( e_{1} - e_{2} \right).
\eq
An alternative representation of $b_0$ and $b_1$ is given as an eta-quotient:
\bq
 b_0 \; = \; \frac{2}{3} \sqrt{3} \frac{\eta\left(2\tau\right)^6 \eta\left(3\tau\right)}{\eta\left(\tau\right)^3 \eta\left(6\tau\right)^2},
 & &
 b_1 \; = \; 6 \sqrt{3} \frac{\eta\left(\tau\right)\eta\left(6\tau\right)^6}{\eta\left(2\tau\right)^2 \eta\left(3\tau\right)^3}.
\eq
Furthermore we may express $b_0$ and $b_1$ in terms of the coefficients $g^{(k)}(z,\tau)$
of the Kronecker function:
\bq
 b_0
 \; = \;
 \frac{1}{2\pi} \left[ g^{(1)}(\frac{1}{3},\tau) + g^{(1)}(\frac{1}{6},\tau) \right],
 & &
 b_1
 \; = \; 
 \frac{1}{2\pi} \left[ 9 g^{(1)}(\frac{1}{3},\tau) - 3 g^{(1)}(\frac{1}{6},\tau) \right].
\eq
In addition we introduce two meromorphic modular forms
\bq
 b_{3}
 \; = \;
 \frac{1}{\left(y-3\right)} \frac{\psi_1^{\mathrm{sunrise}}}{\pi},
 & &
 b_{-3}
 \; = \;
 \frac{1}{\left(y+3\right)} \frac{\psi_1^{\mathrm{sunrise}}}{\pi}.
\eq
We may express all integrands as polynomials in 
\bq
 b_0,
 \;\;\;
 b_1,
 \;\;\;
 b_{3},
 \;\;\;
 b_{-3}.
\eq
For example, the meromorphic modular form $g_6$ of eq.~(\ref{def_g6}) is expressed as
\bq
 g_6
 & = &
 \frac{1}{864} b_0 b_1^5 - \frac{11}{144} b_0^2 b_1^4 + \frac{107}{54} b_0^3 b_1^3 - \frac{421}{16} b_0^4 b_1^2 + \frac{6685}{32} b_0^5 b_1 - 1251 b_0^6 
 + 108 b_0^3 b_{3}^3
 \nonumber \\
 & &
 + 162 b_0^4 b_{3}^2 
 + 81 b_0^5 b_{3}
 + 6912 b_0^3 b_{-3}^3 - 10368 b_0^4 b_{-3}^2 + 6480 b_0^5 b_{-3}.
\eq
The $\qbar$-expansions of the four basic modular forms $b_0$, $b_1$, $b_3$ and $b_{-3}$ start as
\bq
 b_0
 & = &
 2 \sqrt{3} \left[
  \frac{1}{3} + \qbar + \qbar^2 + \qbar^3 + \qbar^4
 \right]
 + {\mathcal O}\left(\qbar^6\right),
 \nonumber \\
 b_1
 & = &
 6 \sqrt{3} \left[
  \qbar - \qbar^2 + \qbar^3 + \qbar^4
 \right]
 + {\mathcal O}\left(\qbar^6\right),
 \nonumber \\
 b_{3}
 & = &
 - \frac{4}{3} \sqrt{3} \left[
 \frac{1}{6} + \qbar + \frac{3}{2} \qbar^2 - 2 \qbar^3 - \frac{7}{2} \qbar^4 + 18 \qbar^5
 \right]
 + {\mathcal O}\left(\qbar^6\right),
 \nonumber \\
 b_{-3}
 & = &
 \frac{2}{9} \sqrt{3} \left[
  1 + 15 \qbar^2 - 72 \qbar^3 + 459 \qbar^4 - 2808 \qbar^5
 \right]
 + {\mathcal O}\left(\qbar^6\right).
\eq
We then have for example
\bq
 g_6
 & = &
 - 2 \qbar
 + 104 \qbar^2
 - 2286 \qbar^3
 + 30112 \qbar^4
 - 306300 \qbar^5
 + {\mathcal O}\left(\qbar^6\right).
\eq
The modular form $g_6$ is rather special: We have verified to very high order (${\mathcal O}(\qbar^{200})$) that all coefficients are integers and that the
coefficient of $\qbar^n$ is divisible by $n^2$.

% ------------------------------------------------------------------------------------------
\section{Iterated integrals of modular forms}
\label{sect:iterated_integrals}

Let $f_1(\tau)$, $f_2(\tau)$, ..., $f_n(\tau)$ be a set of (meromorphic) modular forms.
We define the $n$-fold iterated integral of these modular forms by
\begin{align}
I\left(f_1,f_2,...,f_n;\tau,\tau_0\right)
 & =
 \left(2 \pi i \right)^n
 \int\limits_{\tau_0}^{\tau} d\tau_1
 \int\limits_{\tau_0}^{\tau_1} d\tau_2
 ...
 \int\limits_{\tau_0}^{\tau_{n-1}} d\tau_n
 \;
 f_1\left(\tau_1\right)
 f_2\left(\tau_2\right)
 ...
 f_n\left(\tau_n\right).
\end{align}
With $\qbar=\exp(2\pi i \tau)$ we may equally well write
\begin{align}
I\left(f_1,f_2,...,f_n;\tau,\tau_0\right)
 & =
 \int\limits_{\qbar_0}^{\qbar} \frac{d\qbar_1}{\qbar_1}
 \int\limits_{\qbar_0}^{\qbar_1} \frac{d\qbar_2}{\qbar_2}
 ...
 \int\limits_{\qbar_0}^{\qbar_{n-1}} \frac{d\qbar_n}{\qbar_n}
 \;
 f_1\left(\tau_1\right)
 f_2\left(\tau_2\right)
 ...
 f_n\left(\tau_n\right),
 \qquad \tau_j = \frac{1}{2\pi i} \ln \qbar_j.
\end{align}
Our standard choice for the base point $\tau_0$ will be $\tau_0 = i \infty$, corresponding to $q_0=0$.
If $f_n(\tau)$ does not vanish at the cusp $\tau=i\infty$ we employ the standard ``trailing zero'' or ``tangential base point''
regularisation \cite{Brown:2014aa,Adams:2017ejb,Walden:2020odh}:
We first take $\qbar_0$ to have a small non-zero value.
The integration will produce terms with $\ln(\qbar_0)$. Let $R$ be the operator, which removes all $\ln(\qbar_0)$-terms.
After these terms have been removed, we may take the limit $\qbar_0\rightarrow 0$.
With this regularisation we set
\begin{align}
I\left(f_1,f_2,...,f_n;\tau\right)
 & =
 \lim\limits_{\qbar_0\rightarrow 0}
 R \left[
 \int\limits_{\qbar_0}^{\qbar} \frac{d\qbar_1}{\qbar_1}
 \int\limits_{\qbar_0}^{\qbar_1} \frac{d\qbar_2}{\qbar_2}
 ...
 \int\limits_{\qbar_0}^{\qbar_{n-1}} \frac{d\qbar_n}{\qbar_n}
 \;
 f_1\left(\tau_1\right)
 f_2\left(\tau_2\right)
 ...
 f_n\left(\tau_n\right)
 \right].
\end{align}
In this notation the expression $F_2$ from eq.~(\ref{def_F2}) becomes
\bq
 F_2
 & = & 
 I\left(1,g_6;\tau\right).
\eq
Furthermore it will be convenient to introduce a short-hand notation for repeated letters.
We use the notation
\begin{align}
 \left\{f_i\right\}^j & =
 \underbrace{ f_i, f_i, ..., f_i}_{j}
\end{align}
to denote a sequence of $j$ letters $f_i$.
For iterated integrals we have the shuffle product, for example
\bq
 I\left(f_1,f_2;\tau\right)  I\left(f_3;\tau\right)
 & = &
 I\left(f_1,f_2,f_3;\tau\right)
 + I\left(f_1,f_3,f_2;\tau\right)
 + I\left(f_3,f_1,f_2;\tau\right).
\eq
Using the antipode in the shuffle algebra \cite{Weinzierl:2022eaz} one easily shows that
\bq
\label{antipode_relation}
 I\left(f_k,\left\{1\right\}^j;\tau\right)
 & = &
 \sum\limits_{i=0}^j 
  \left(-1\right)^i I\left(\left\{1\right\}^{j-i};\tau\right) I\left(\left\{1\right\}^i,f_k;\tau\right),
 \nonumber \\
 I\left(\left\{1\right\}^j,f_k;\tau\right)
 & = &
 \sum\limits_{i=0}^j 
  \left(-1\right)^i I\left(\left\{1\right\}^{j-i};\tau\right) I\left(f_k,\left\{1\right\}^i;\tau\right).
\eq
These relations will be useful when we work out the modular transformation properties of $F_2$.

% ------------------------------------------------------------------------------------------
\section{The differential equation}
\label{sect:dgl}

For the basis $I$ the differential equation is in $\eps$-form
\bq
 d I & = & \eps A I,
\eq
with
\bq
 A & = &
 2 \pi i 
 \left( \begin{array}{cccc}
 0 & 0 & 0 & 0 \\
 0 & -f_{2,a}-f_{2,b} & 1 & 0 \\
 0 & f_{4,b} & -f_{2,a}+2f_{2,b} & 1 \\
 f_{4,a} & f_{6} & f_{4,b} & -f_{2,a}-f_{2,b} \\
 \end{array} \right) d\tau.
\eq
This is a differential equation with an alphabet consisting of six letters
\bq
 {\mathcal A}
 & = &
 \left\{
  1, f_{2,a}, f_{2,b}, f_{4,a}, f_{4,b}, f_6
 \right\}.
\eq
Below we give expressions for all non-trivial letters in $x$-space, $y$-space and $\tau$-space.

The differential one-forms $f_{2,a}$ and $f_{2,b}$ are of modular weight $2$.
The meromorphic modular form $f_{2,a}$ is just the translation of a dlog-form to the variable $\tau$:
\bq
f_{2,a} \cdot 2 \pi i d\tau & = & d\ln\left(x-4\right) + d\ln\left(x-16\right).
\eq
We have
\bq
 f_{2,a}
 & = &
 \left( \frac{1}{x-4} + \frac{1}{x-16} \right) \frac{\psi_1^2}{2\pi i W}
 \nonumber \\
 & = &
 \left[ \frac{1}{6} y^2 - \frac{5}{3} y + \frac{9}{2} - \frac{6}{y-3} - \frac{24}{y+3} \right] \left( \frac{\psi_1^{\mathrm{sunrise}}}{\pi} \right)^2
 \nonumber \\
 & = &
 \frac{1}{6} b_1^2 - \frac{5}{3} b_0 b_1 + \frac{9}{2} b_0^2 - 6 b_0 b_{3} - 24 b_0 b_{-3}.
\eq
The differential one-form $f_{2,b}$ is given by
\bq
 f_{2,b}
 & = &
 F_2,
\eq
with 
\bq
 F_2
 & = & 
 I\left(1,g_6;\tau\right)
\eq
and
\bq
\label{def_g6_v2}
 g_6
 & = &
 \frac{x\left(x-8\right)\left(x+8\right)^3}{864 \left(4-x\right)^{\frac{3}{2}} \left(16-x\right)^{\frac{3}{2}}} \left( \frac{\psi_1}{\pi} \right)^{6}
 \nonumber \\
 & = &
 \frac{y\left(y-1\right)\left(y-9\right)\left(y^2-2y+9\right)\left(y^2-18y+9\right)^3}{864\left(y-3\right)^3\left(y+3\right)^3}\left( \frac{\psi_1^{\mathrm{sunrise}}}{\pi} \right)^6
 \nonumber \\
 & = &
 \frac{1}{864} b_0 b_1^5 - \frac{11}{144} b_0^2 b_1^4 + \frac{107}{54} b_0^3 b_1^3 - \frac{421}{16} b_0^4 b_1^2 + \frac{6685}{32} b_0^5 b_1 - 1251 b_0^6 
 + 108 b_0^3 b_{3}^3
 \nonumber \\
 & &
 + 162 b_0^4 b_{3}^2 
 + 81 b_0^5 b_{3}
 + 6912 b_0^3 b_{-3}^3 - 10368 b_0^4 b_{-3}^2 + 6480 b_0^5 b_{-3}.
\eq
$f_{4,a}$ and $f_{4,b}$ are of modular weight $4$.
The differential one-form $f_{4,a}$ is a (holomorphic) modular form and given by
\bq
 f_{4,a}
 & = &
 \frac{1}{72} x \left(4-x\right)^{\frac{1}{2}} \left(16-x\right)^{\frac{1}{2}} \left( \frac{\psi_1}{\pi}\right)^4
 \nonumber \\
 & = &
 \frac{1}{72} \left(y-1\right)\left(y-9\right) \left(y-3\right)\left(y+3\right) \left( \frac{\psi_1^{\mathrm{sunrise}}}{\pi} \right)^4
 \nonumber \\
 & = &
 \frac{1}{72} b_1^4 - \frac{5}{36} b_0 b_1^3 + \frac{5}{4} b_0^3 b_1 - \frac{9}{8} b_0^4.
\eq
The differential one-form $f_{4,b}$ is given by
\bq
 f_{4,b}
 & = &
 \frac{\left(x+8\right)^2\left(x^2-8x+64\right)}{288 \left(x-4\right)\left(x-16\right)} \left( \frac{\psi_1}{\pi}\right)^4
 - \frac{3}{2} F_2^2
 \nonumber \\
 & = &
 \frac{\left(y^2-18y+9\right)^2\left(y^4-12y^3+102y^2-108y+81\right)}{288 \left(y-3\right)^2\left(y+3\right)^2} \left( \frac{\psi_1^{\mathrm{sunrise}}}{\pi} \right)^4
 - \frac{3}{2} F_2^2
 \nonumber \\
 & = &
 \frac{1}{288} b_1^4 - \frac{1}{6} b_0 b_1^3 + \frac{149}{48} b_0^2 b_1^2 - \frac{63}{2} b_0^3 b_1 + \frac{6537}{32} b_0^4
 + 54 b_0^2 b_{3}^2 + 54 b_0^3 b_{3}
 \nonumber \\
 & &
 + 864 b_0^2 b_{-3}^2 - 864 b_0^3 b_{-3}
 - \frac{3}{2} F_2^2.
\eq
Finally, at modular weight $6$ we have
\bq
 f_{6}
 & = &
 \frac{x\left(x-8\right)\left(x+8\right)^3}{216 \left(4-x\right)^{\frac{3}{2}} \left(16-x\right)^{\frac{3}{2}}} \left( \frac{\psi_1}{\pi}\right)^6
 - \frac{\left(x+8\right)^2\left(x^2-8x+64\right)}{144 \left(x-4\right)\left(x-16\right)} \left( \frac{\psi_1}{\pi} \right)^4 F_2
 + F_2^3
 \nonumber \\
 & = &
 \frac{y \left(y-1\right)\left(y-9\right)\left(y^2-2y+9\right)\left(y^2-18y+9\right)^3}{216\left(y-3\right)^3\left(y+3\right)^3}\left( \frac{\psi_1^{\mathrm{sunrise}}}{\pi} \right)^6
 \nonumber \\
 & &
 - \frac{\left(y^2-18y+9\right)^2\left(y^4-12y^3+102y^2-108y+81\right)}{144 \left(y-3\right)^2\left(y+3\right)^2} \left( \frac{\psi_1^{\mathrm{sunrise}}}{\pi} \right)^4 F_2
 + F_2^3
 \nonumber \\
 & = &
 \frac{1}{216} b_0 b_1^5 - \frac{11}{36} b_0^2 b_1^4 + \frac{214}{27} b_0^3 b_1^3 - \frac{421}{4} b_0^4 b_1^2 + \frac{6685}{8} b_0^5 b_1 - 5004 b_0^6 
 + 432 b_0^3 b_{3}^3
 \nonumber \\
 & &
 + 648 b_0^4 b_{3}^2 
 + 324 b_0^5 b_{3}
 + 27648 b_0^3 b_{-3}^3 - 41472 b_0^4 b_{-3}^2 + 25920 b_0^5 b_{-3}
 - \left[  \frac{1}{144} b_1^4 - \frac{1}{3} b_0 b_1^3
 \right.
 \nonumber \\
 & &
 \left.
 + \frac{149}{24} b_0^2 b_1^2 - 63 b_0^3 b_1 + \frac{6537}{16} b_0^4
 + 108 b_0^2 b_{3}^2 + 108 b_0^3 b_{3}
 + 1728 b_0^2 b_{-3}^2 - 1728 b_0^3 b_{-3}
 \right] F_2
 \nonumber \\
 & &
 + F_2^3.
\eq
With the help of eq.~(\ref{def_hauptmodul}) and eq.~(\ref{def_b0_b1}) the
$\qbar$-expansions of $f_{2,a}$, $f_{2,b}$, $f_{4,a}$, $f_{4,b}$ and $f_6$ are readily obtained.
The first few terms read
\bq
 f_{2,a}
 & = &
 - 2 + 8 \qbar - 44 \qbar^2 + 440 \qbar^3 - 2956 \qbar^4 + 17328 \qbar^5 
 + {\mathcal O}\left(\qbar^6\right),
 \nonumber \\
 f_{2,b}
 & = &
 - 2 \qbar + 26 \qbar^2 - 254 \qbar^3 + 1882 \qbar^4 - 12252 \qbar^5
 + {\mathcal O}\left(\qbar^6\right),
 \nonumber \\
 f_{4,a}
 & = &
 -2 - 4 \qbar + 28 \qbar^2 - 76 \qbar^3 + 284 \qbar^4 - 504 \qbar^5
 + {\mathcal O}\left(\qbar^6\right),
 \nonumber \\
 f_{4,b}
 & = &
 \frac{1}{2} - 18 \qbar + 282 \qbar^2 -3150 \qbar^3 + 28314 \qbar^4 - 200268 \qbar^5
 + {\mathcal O}\left(\qbar^6\right),
 \nonumber \\
 f_{6}
 & = &
 - 6 \qbar + 318 \qbar^2 - 6810 \qbar^3 + 81534 \qbar^4 - 710676 \qbar^5
 + {\mathcal O}\left(\qbar^6\right).
\eq
We have checked to very high order (${\mathcal O}(\qbar^{200})$) that with the exception of the constant term of $f_{4,b}$ 
all coefficients are integers.

Let us now look at the poles at $x=4$ and $x=16$ in $\tau$-space.
The Jacobian for the transformation from $\tau$-space to $y$-space is smooth in a neighbourhood of these points,
so we may discuss the poles in $y$-space.
From eq.~(\ref{def_g6_v2}) we see that $g_6$ has a triple pole at $y=3$ (corresponding to $x=4$) and at $y=-3$ (corresponding to $x=16$).
However, $f_{2,b}$ has only a simple pole at $y=\pm3$.
For $f_{4,b}$ and $f_6$, the triple and double poles appearing in the individual terms of their definition cancel in the sum,
leaving $f_{4,b}$ and $f_6$ with a simple pole at $y=\pm3$.
In summary it follows that all entries of the differential equation have at most a simple pole at $y=\pm 3$.

% -----------------------------------------------------------------------------

\section{Modular transformations}
\label{sect:modular_transformation}

In this section we discuss the behaviour of $f_{2,a}$, $f_{2,b}$, $f_{4,a}$, $f_{4,b}$ and $f_{6}$ under 
modular transformations.
We start with a few definitions.
Let 
\bq
 \gamma 
 & = &
 \left(\begin{array}{cc}
       a & b \\
       c & d \\
  \end{array} \right) 
 \; \in \; \Gamma.
\eq 
We consider the transformation
\bq
 \tau' 
 \; = \; 
 \gamma\left(\tau\right)
 \; = \;
 \frac{a\tau+b}{c\tau+d},
 & &
 d\tau'
 \; = \;
 \frac{d\tau}{\left(c\tau+d\right)^2}.
\eq
The point $\tau'=\frac{a}{c}$ corresponds to $\tau=i\infty$.

Let $f : {\mathbb H} \rightarrow \hat{{\mathbb C}}$ be a function from the complex upper half-plane to 
$\hat{{\mathbb C}} = {\mathbb C} \cup \{\infty\}$. 
We define
\bq
 (f \slashoperator{\gamma}{k})(\tau) & = & (c\tau+d)^{-k} \cdot f(\gamma(\tau)).
\eq
We say that $f$ transforms as a modular form of modular weight $k$ (or is weakly modular) if
\bq
\label{def_modular_trafo}
 (f \slashoperator{\gamma}{k})(\tau) & = & f(\tau).
\eq
We say that $f$ transforms as a quasi-modular form of modular weight $k$ and depth $p$ if there are $f_1, \dots, f_p$
such that
\bq
\label{def_quasi_modular_trafo}
 (f \slashoperator{\gamma}{k})(\tau)
 & = &
 f(\tau)
 + \sum\limits_{j=1}^p \left( \frac{c}{c\tau+d} \right)^j f_j(\tau).
\eq
We need one more generalisation:
We say that $f$ transforms as ``quasi-Eichler'' of modular weight $k$ and depth $p$ if
\bq
\label{def_quasi_Eichler}
 (f \slashoperator{\gamma}{k})(\tau)
 & = &
 f(\tau)
 + \sum\limits_{j=1}^p \left( \frac{c}{c\tau+d} \right)^j f_j(\tau)
 + \frac{P_\gamma(\tau)}{\left(c\tau+d\right)^p},
\eq
where $P_\gamma(\tau)$ is a polynomial in $\tau$
of degree at most $(p-k)$.
Note that the $f_j$'s are independent of $\gamma$, while the polynomial $P_\gamma$ may depend on $\gamma$.
A special case is an Eichler integral, which corresponds to the case $p=0$:
$f$ transforms as an Eichler integral of modular weight $k$ (with $k<0$)
if there is a polynomial $P_\gamma(\tau)$ of degree
at most $(-k)$ such that
\bq
 (f \slashoperator{\gamma}{k})(\tau)
 & = &
 f(\tau)
 + P_\gamma(\tau).
\eq
We now specialise to $\Gamma=\Gamma_1(6)$.
Generators of $\Gamma_1(6)$ are
\bq
 \left\{
 \left(\begin{array}{rr}
  1 & 1 \\
  0 & 1 \\
 \end{array} \right),
 \left(\begin{array}{rr}
  -5 & 1 \\
  -6 & 1 \\
 \end{array} \right),
 \left(\begin{array}{rr}
  7 & -3 \\
  12 & -5 \\
 \end{array} \right)
 \right\}.
\eq
$f_{4,a}$ is a (holomorphic) modular form for $\Gamma_1(6)$, $f_{2,a}$ is a meromorphic modular form for $\Gamma_1(6)$.
Both transform for $\gamma \in \Gamma_1(6)$ according to eq.~(\ref{def_modular_trafo}).

We work out the behaviour of $f_{2,b}$ under modular transformations.
We start from 
\bq
 f_{2,b}\left(\tau'\right) & = & I\left(1,g_6;i \infty, \tau'\right).
\eq
The path composition formula for iterated integrals gives us
\bq
 f_{2,b}\left(\tau'\right)
 & = &
 I\left(1,g_6;\frac{a}{c}, \tau'\right) 
 + I\left(g_6;i \infty, \frac{a}{c}\right) I\left(1;\frac{a}{c}, \tau'\right)     
 + I\left(1,g_6;i \infty, \frac{a}{c}\right).
\eq
Let us emphasise that we require the combined path in $\tau'$-space, consisting of the path from
$i \infty$ to $a/c$ followed by the path from $a/c$ to $\tau'$ followed by the reverse path from $\tau'$ to $i\infty$
to be homotopy equivalent to the zero path (i.e. no poles are encircled and no branch cuts are crossed).
We define two constants
\bq
 C_{6}
 \; = \;
 I\left(g_6;i \infty, \frac{a}{c}\right),
 & & 
 C_{1,6}
 \; = \; 
 I\left(1,g_6;i \infty, \frac{a}{c}\right).
\eq
Furthermore we have
\bq
 I\left(1;\frac{a}{c}, \tau'\right) 
 & = &
 2 \pi i \left(\tau'-\frac{a}{c} \right)
 \; = \;
 - \frac{2\pi i}{c} \left(c\tau+d\right)^{-1}.
\eq
A leading one will introduce a negative power of the automorphic factor $(c\tau+d)$ in the integrand.
This can be avoided by first using eq.~(\ref{antipode_relation}) to convert any leading $1$'s to trailing $1$'s, 
using then the modular transformation law for the integrands and in the end converting back to leading $1$'s, again with the
help of eq.~(\ref{antipode_relation}).
We find
\bq
 (f_{2,b} \slashoperator{\gamma}{2})(\tau)
 & = &
 f_{2,b}(\tau)
 - 6 \frac{c}{c\tau+d} \frac{1}{2\pi i} I\left(1,1,g_6;\tau\right)
 + 18 \left(\frac{c}{c\tau+d}\right)^2 \frac{1}{\left(2\pi i\right)^2} I\left(1,1,1,g_6;\tau\right)
 \nonumber \\
 & &
 - 24 \left(\frac{c}{c\tau+d}\right)^3 \frac{1}{\left(2\pi i\right)^3} I\left(1,1,1,1,g_6;\tau\right)
 + \frac{C_{1,6}}{\left(c\tau+d\right)^2}
 - \frac{2 \pi i C_{6}}{c \left(c\tau+d\right)^3}.
\eq
We see that $f_{2,b}$ transforms as quasi-Eichler of modular weight $2$ and depth $3$.

Once the transformation law for $F_2=f_{2,b}$ under modular transformations $\gamma \in \Gamma_1(6)$ is known,
the transformation laws for $f_{4,b}$ and $f_6$ follow from the definition of these quantities.
They are given as polynomials of building blocks, whose transformation is known.

% -----------------------------------------------------------------------------

\section{Analytical results}
\label{sect:results}

We write for the $\eps$-expansion of the master integrals
\bq
 I_j & = & \sum\limits_{k=0}^\infty I_j^{(k)} \eps^k.
\eq
Up to order $\eps^4$ the results are still relatively compact and we list them below.
Results up to order $\eps^6$ can be found in an ancillary file attached to this article.
The master integral $I_1$ is very simple and given by
\bq
 I_1
 & = &
 1
 + \frac{3}{2} \zeta_2 \eps^2
 - \zeta_3 \eps^3
 + \frac{57}{16} \zeta_4 \eps^4
 + {\mathcal O}\left(\eps^5\right).
\eq
The interesting master integrals are $I_2$, $I_3$ and $I_4$.
The master integral $I_2$ starts at order $\eps^3$.
Its terms of order $\eps^3$ and $\eps^4$ are given by
\bq
 I_2^{(3)}
 & = &
 \frac{4}{3} \zeta_3 + I\left(1,1,f_{4,a};\tau\right),
 \nonumber \\
 I_2^{(4)}
 & = &
 2 \zeta_4
          + \frac{4}{3} \zeta_3 \left[ \frac{11}{2} \ln\left(\qbar\right) - I\left(f_{2,a};\tau\right) - I\left(f_{2,b};\tau\right) \right]
          + \zeta_2 \ln^2\left(\qbar\right)
          - I\left(1,1,f_{2,a},f_{4,a};\tau\right)
 \nonumber \\ 
 & &
          - I\left(1,f_{2,a},1,f_{4,a};\tau\right) - I\left(f_{2,a},1,1,f_{4,a};\tau\right)
          - I\left(1,1,f_{2,b},f_{4,a};\tau\right)
 \nonumber \\ 
 & &
          + 2 I\left(1,f_{2,b},1,f_{4,a};\tau\right)
          - I\left(f_{2,b},1,1,f_{4,a};\tau\right).
\eq
The term $I_2^{(3)}$ agrees with the result of Bloch, Kerr and Vanhove \cite{Bloch:2014qca}.
In our notation their result reads
\bq
 I_2^{(3)}
 & = &
 \frac{4}{3} \zeta_3 - \frac{1}{3}\ln^3\left(\qbar\right) 
        - 4 \sum\limits_{n=1}^\infty \frac{\chi\left(n\right)}{n^3} \frac{\qbar^n}{1-\qbar^n},
\eq
where $\chi$ is a character of modulus six (i.e. $\chi(n+6)=\chi(n)$), taking the values
\bq
\begin{array}{c|rrrrrr}
 n & 0 & 1 & 2 & 3 & 4 & 5 \\
 \hline
 \chi & 120 & 1 & -15 & -8 & -15 & 1 \\
\end{array}.
\eq
The master integral $I_3$ starts at order $\eps^2$. The non-zero terms up to order $\eps^4$ read
\bq
 I_3^{(2)}
 & = &
 I\left(1,f_{4,a};\tau\right),
 \nonumber \\
 I_3^{(3)}
 & = &
 \frac{22}{3} \zeta_3
          + 2 \zeta_2 \ln\left(\qbar\right)
          - I\left(1,f_{2,a},f_{4,a};\tau\right) - I\left(f_{2,a},1,f_{4,a};\tau\right)
          - I\left(1,f_{2,b},f_{4,a};\tau\right)
 \nonumber \\
 & &
          + 2 I\left(f_{2,b},1,f_{4,a};\tau\right),
 \nonumber \\
 I_3^{(4)}
 & = &
          \frac{17}{2} \zeta_4
          + \frac{4}{3} \zeta_3 \left[ 8 \ln\left(\qbar\right) - \frac{11}{2} I\left(f_{2,a};\tau\right)
                                       + 11 I\left(f_{2,b};\tau\right) + I\left(f_{4,b};\tau\right) \right]
 \nonumber \\
 & &
          - 2 \zeta_2 \left[
                             I\left(1,f_{2,a};\tau\right) + I\left(f_{2,a},1;\tau\right) 
                           + I\left(1,f_{2,b};\tau\right) - 2 I\left(f_{2,b},1;\tau\right) 
                           - \frac{3}{4} I\left(1,f_{4,a};\tau\right)
                      \right]
 \nonumber \\
 & &
          + I\left(1,f_{2,a},f_{2,a},f_{4,a};\tau\right)
          + I\left(f_{2,a},1,f_{2,a},f_{4,a};\tau\right)
          + I\left(f_{2,a},f_{2,a},1,f_{4,a};\tau\right)
 \nonumber \\
 & &
          + I\left(1,f_{2,a},f_{2,b},f_{4,a};\tau\right)
          + I\left(1,f_{2,b},f_{2,a},f_{4,a};\tau\right)
          - 2 I\left(f_{2,b},1,f_{2,a},f_{4,a};\tau\right)
 \nonumber \\
 & &
          + I\left(f_{2,a},1,f_{2,b},f_{4,a};\tau\right)
          - 2 I\left(f_{2,a},f_{2,b},1,f_{4,a};\tau\right)
          - 2 I\left(f_{2,b},f_{2,a},1,f_{4,a};\tau\right)
 \nonumber \\
 & &
          + I\left(1,f_{2,b},f_{2,b},f_{4,a};\tau\right)
          - 2 I\left(f_{2,b},1,f_{2,b},f_{4,a};\tau\right)
          + 4 I\left(f_{2,b},f_{2,b},1,f_{4,a};\tau\right)
 \nonumber \\
 & &
          + I\left(1,f_{4,b},1,f_{4,a};\tau\right)
          + I\left(f_{4,b},1,1,f_{4,a};\tau\right).
\eq
The master integral $I_4$ starts at order $\eps$. The non-zero terms up to order $\eps^4$ read
\bq
 I_4^{(1)}
 & = &
 I\left(f_{4,a};\tau\right),
 \nonumber \\
 I_4^{(2)}
 & = &
 2 \zeta_2 
          - I\left(f_{2,a},f_{4,a};\tau\right)
          - I\left(f_{2,b},f_{4,a};\tau\right),
 \nonumber \\
 I_4^{(3)}
 & = &
 \frac{32}{3} \zeta_3 
          - 2 \zeta_2 \left[
                             I\left(f_{2,a};\tau\right)
                           + I\left(f_{2,b};\tau\right)
                           - \frac{3}{4} I\left(f_{4,a};\tau\right)
                      \right]
          + I\left(f_{2,a},f_{2,a},f_{4,a};\tau\right)
 \nonumber \\
 & &
          + I\left(f_{2,a},f_{2,b},f_{4,a};\tau\right)
          + I\left(f_{2,b},f_{2,a},f_{4,a};\tau\right)
          + I\left(f_{2,b},f_{2,b},f_{4,a};\tau\right)
          + I\left(f_{4,b},1,f_{4,a};\tau\right),
 \nonumber \\
 I_4^{(4)}
 & = &
          \frac{39}{2} \zeta_4
          - \frac{4}{3} \zeta_3 \left[
                                       8 I\left(f_{2,a};\tau\right)
                                       + 8 I\left(f_{2,b};\tau\right)
                                       + \frac{3}{4} I\left(f_{4,a};\tau\right)
                                       - \frac{11}{2} I\left(f_{4,b};\tau\right)
                                       - I\left(f_{6};\tau\right)
                                \right]
 \nonumber \\
 & &
          + 2 \zeta_2 \left[
                             I\left(f_{2,a},f_{2,a};\tau\right)
                             + I\left(f_{2,a},f_{2,b};\tau\right)
                             + I\left(f_{2,b},f_{2,a};\tau\right)
                             + I\left(f_{2,b},f_{2,b};\tau\right)
 \right. \nonumber \\
 & & \left.
                             - \frac{3}{4} I\left(f_{2,a},f_{4,a};\tau\right)
                             - \frac{3}{4} I\left(f_{2,b},f_{4,a};\tau\right)
                             + I\left(f_{4,b},1;\tau\right)
                       \right]
     - I\left(f_{2,a},f_{2,a},f_{2,a},f_{4,a};\tau\right)
 \nonumber \\
 & &
     - I\left(f_{2,a},f_{2,a},f_{2,b},f_{4,a};\tau\right)
     - I\left(f_{2,a},f_{2,b},f_{2,a},f_{4,a};\tau\right)
     - I\left(f_{2,b},f_{2,a},f_{2,a},f_{4,a};\tau\right)
 \nonumber \\
 & &
     - I\left(f_{2,a},f_{2,b},f_{2,b},f_{4,a};\tau\right)
     - I\left(f_{2,b},f_{2,a},f_{2,b},f_{4,a};\tau\right)
     - I\left(f_{2,b},f_{2,b},f_{2,a},f_{4,a};\tau\right)
 \nonumber \\
 & &
     - I\left(f_{2,b},f_{2,b},f_{2,b},f_{4,a};\tau\right)
     - I\left(f_{2,a},f_{4,b},1,f_{4,a};\tau\right)
     - I\left(f_{2,b},f_{4,b},1,f_{4,a};\tau\right)
 \nonumber \\
 & &
     - I\left(f_{4,b},1,f_{2,a},f_{4,a};\tau\right)
     - I\left(f_{4,b},1,f_{2,b},f_{4,a};\tau\right)
     - I\left(f_{4,b},f_{2,a},1,f_{4,a};\tau\right)
 \nonumber \\
 & &
     + 2 I\left(f_{4,b},f_{2,b},1,f_{4,a};\tau\right)
     + I\left(f_{6},1,1,f_{4,a};\tau\right).
\eq
As $f_{2,b}$ is itself an iterated integral
\bq
 f_{2,b}(\tau) & = & I\left(1,g_6;\tau\right),
\eq
we may in principle eliminate $f_{2,b}$ in favour of $g_6$.
For example
\bq
 I\left(1,f_{2,b},f_{4,a};\tau\right)
 & = &
   I\left(1,1,1,g_6,f_{4,a};\tau\right)
 + I\left(1,1,1,f_{4,a},g_6;\tau\right)
 + I\left(1,1,f_{4,a},1,g_6;\tau\right).
 \;\;\;
\eq
In this way we obtain only iterated integrals of meromorphic modular forms, confirming the result of \cite{Broedel:2021zij}.
However there is a price to pay: Doing so, we introduce integrands with higher poles and we spoil the uniform depth property.

% -----------------------------------------------------------------------------

\section{Numerical results}
\label{sect:numerics}

The $\qbar$-expansions can be used to obtain numerical results within the region of convergence of the series expansion.
We may verify the results by comparing to programs based on sector decomposition like 
\verb|sector_decomposition| \cite{Bogner:2007cr},
\verb|FIESTA| \cite{Smirnov:2008py,Smirnov:2009pb}
or
\verb|SecDec| \cite{Carter:2010hi,Borowka:2017idc,Borowka:2018goh}.

We start with $I_2^{(3)}$. The analytic expressions for this term involves only the (holomorphic)
modular forms $1$ and $f_{4,a}$.
It does not involve any meromorphic modular form. The $\qbar$-expansion converges therefore in the full complex upper half-plane. Translated to $x$-space this means that our result converges for all values 
\bq
 x & \in & {\mathbb R}\backslash\{0\} + i \delta.
\eq
\begin{figure}
\begin{center}
\includegraphics[scale=0.9]{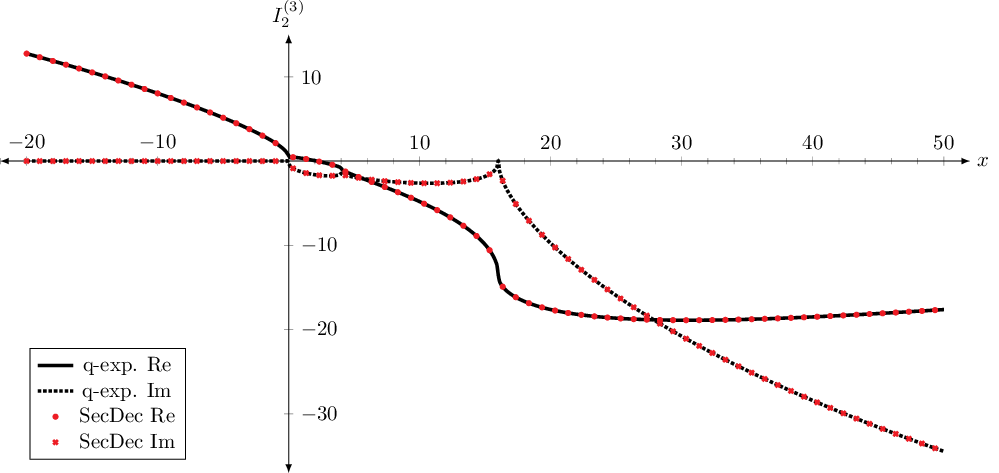}
\end{center}
\caption{
Comparison of our result for $I_2^{(3)}$ with numerical results from SecDec.
The $\qbar$-expansion converges for all points except $x=0$.
}
\label{fig_result_eps_3}
\end{figure}
The numerical values are shown in fig.~\ref{fig_result_eps_3}.
We also plotted the results from $\verb|SecDec|$.
We observe good agreement.
It is worth noting that $I_2^{(3)}$ as a function of $\tau$ is holomorphic in a neighbourhood of
$\tau=1/2+i\sqrt{3}/6$ and $\tau=1/4+i\sqrt{3}/12$.
The behaviour of $I_2^{(3)}$ at the threshold $x=16$ and the pseudo-threshold $x=4$ comes entirely from the kinks of the path in $\tau$-space
(see fig.~\ref{fig_contour_tau_qbar}).

Let us then look at the next order in the $\eps$-expansion.
The analytic result for $I_2^{(4)}$ involves the meromorphic modular forms, hence the $\qbar$-series converges
for
\bq
 x \in \left] -\infty, -2 \right[ + i \delta
 & \mbox{and} &
 x \in \left]16,\infty\right[ + i \delta.
\eq
\begin{figure}
\begin{center}
\includegraphics[scale=0.9]{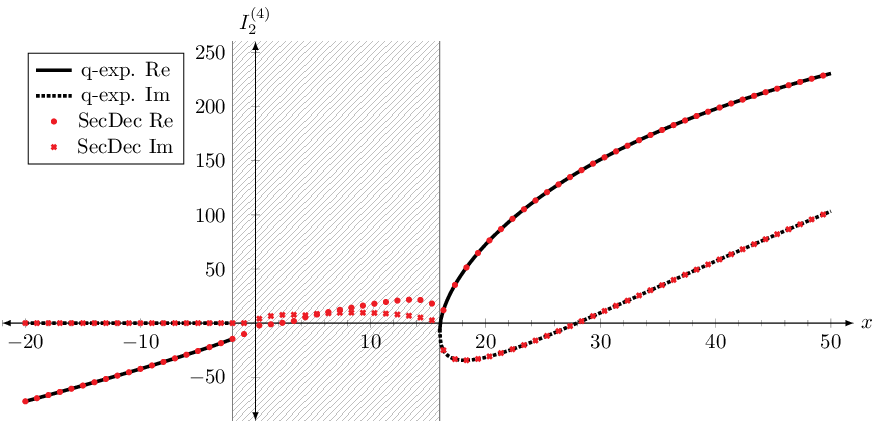}
\end{center}
\caption{
Comparison of our result for $I_2^{(4)}$ with numerical results from SecDec.
The $\qbar$-expansion converges for $x<-2$ and $x>16$. In the interval $[-2,16]$ only the points from SecDec are shown.
}
\label{fig_result_eps_4}
\end{figure}
The numerical results in these regions are shown in fig.~\ref{fig_result_eps_4}, again with the corresponding values from $\verb|SecDec|$.
We observe again good agreement.
In the interval $[-2,16]$ only the points from SecDec are shown.

% ------------------------------------------------------------------------------------------
\section{Conclusions}
\label{sect:conclusions}

In this paper we studied the three-loop banana integral with equal masses.
We derived a differential equation in $\eps$-factorised form, containing six letters.
The letters are built from meromorphic modular forms and one special function,
which can be given as an iterated integral of meromorphic modular forms. 
We investigated this special function in detail:
It has a $\qbar$-expansion with integer coefficients and only simple poles at $x=4$ and $x=16$.
Under modular transformations of $\Gamma_1(6)$ it transforms as ``quasi-Eichler'' (see eq.~(\ref{def_quasi_Eichler})).

The result of this paper adds further evidence that an $\eps$-factorised form of the differential equation might exist for any Feynman integral.

\subsection*{Acknowledgements}

We would like to thank Claude Duhr for useful discussions.

% ------------------------------------------------------------------------------------------
\begin{appendix}

% ------------------------------------------------------------------------------------------
\section{Boundary values}
\label{sect:boundary}

We determine the boundary at $1/x=0$.
We start from the Feynman parametrisation
\bq
 I_{1111}
 & =& 
 e^{3 \gamma_E \eps}
 \Gamma\left(1+3\eps\right)
 \int\limits_{a_i\ge 0} d^4a \; \delta\left(1-\sum\limits_{i=1}^4 a_i\right)
 \frac{{\mathcal U}^{4\eps}}{{\mathcal F}^{1+3\eps}},
\eq
with
\bq
 {\mathcal U} 
 & = & 
 a_1 a_2 a_3 + a_1 a_2 a_4 + a_1 a_3 a_4 + a_2 a_3 a_4,
 \nonumber \\
 {\mathcal F}
 & = &
 - x a_1 a_2 a_3 a_4 + \left(a_1+a_2+a_3+a_4\right) {\mathcal U}.
\eq
We follow the lines of \cite{Broedel:2019kmn} and exchange the Feynman parameter integration with a three-fold
Mellin-Barnes integration.
We arrive at
\bq
\lefteqn{
 I_{1111}
 = } & & \nonumber \\
 & &
 \frac{1}{2} \Gamma\left(\frac{1}{2}\right)
 2^{-4\eps}
 e^{3 \gamma_E \eps}
 \frac{1}{\left(2\pi i\right)^3}
 \int d\sigma_1
 \int d\sigma_2
 \int d\sigma_3
 \left(-\frac{4}{x}\right)^{-\sigma_1}
 4^{-\sigma_{23}}
 \Gamma\left(-\sigma_1\right)
 \Gamma\left(-\sigma_2\right)
 \Gamma\left(-\sigma_3\right)
 \nonumber \\
 & &
 \Gamma\left(-\sigma_2-\eps\right)
 \Gamma\left(-\sigma_3-\eps\right)
 \frac{\Gamma\left(\sigma_{123}+1+\eps\right)\Gamma\left(\sigma_{123}+1+2\eps\right)\Gamma\left(\sigma_{123}+1+3\eps\right)}
      {\Gamma\left(\sigma_{1}+1-\eps\right)\Gamma\left(\sigma_{123}+\frac{3}{2}+2\eps\right)},
\eq
with $\sigma_{ij}=\sigma_i+\sigma_j$ and $\sigma_{ijk}=\sigma_i+\sigma_j+\sigma_k$.
We are interested in the region where $x$ is large.
The half-circles at infinity vanish if we close the contour for the $\sigma_1$-integration to the left
and the contours for the $\sigma_2$- and $\sigma_3$-integration to the right.
Thus we pick up to residues of 
\bq
 \Gamma\left(\sigma_{123}+1+\eps\right), \;\;\; 
 \Gamma\left(\sigma_{123}+1+2\eps\right), \;\;\; 
 \Gamma\left(\sigma_{123}+1+3\eps\right)
\eq
for the $\sigma_1$-integration, the residues of 
\bq
 \Gamma\left(-\sigma_2\right), \;\;\;
 \Gamma\left(-\sigma_2-\eps\right)
\eq
for the $\sigma_2$-integration and the residues of
\bq
 \Gamma\left(-\sigma_3\right), \;\;\;
 \Gamma\left(-\sigma_3-\eps\right)
\eq
for the $\sigma_3$-integration.
This would give $12$ terms. However the triple residue of
\bq
 \frac{\Gamma\left(\sigma_{123}+1+\eps\right)\Gamma\left(-\sigma_2-\eps\right)\Gamma\left(-\sigma_3-\eps\right)}{\Gamma\left(\sigma_{1}+1-\eps\right)}
\eq
vanishes due to $\Gamma\left(\sigma_{1}+1-\eps\right)$ in the denominator, resulting in $11$ terms.
We obtain
\bq
\lefteqn{
 I_{1111}
 = 
 - \frac{2}{x}
 \Gamma\left(\frac{1}{2}\right)
 2^{-4\eps}
 e^{3 \gamma_E \eps}
 \sum\limits_{n_1=0}^\infty
 \sum\limits_{n_2=0}^\infty
 \sum\limits_{n_3=0}^\infty
 \frac{1}{n_1!n_2!n_3!}
 \left(\frac{4}{x}\right)^{n_1}
 \left(\frac{1}{x}\right)^{n_2+n_3}
} & & \nonumber \\
 & &
 \left\{
  \left(-\frac{4}{x}\right)^{\eps} \frac{\Gamma\left(-n_1+\eps\right)\Gamma\left(-n_1+2\eps\right)\Gamma\left(-n_2-\eps\right)\Gamma\left(-n_3-\eps\right)\Gamma\left(n_{123}+1+\eps\right)}{\Gamma\left(-n_{123}-2\eps\right)\Gamma\left(-n_1+\frac{1}{2}+\eps\right)}
 \right. \nonumber \\
 & & \left.
  + \left(-\frac{4}{x}\right)^{2\eps} \frac{\Gamma\left(-n_1-\eps\right)\Gamma\left(-n_1+\eps\right)\Gamma\left(-n_2-\eps\right)\Gamma\left(-n_3-\eps\right)\Gamma\left(n_{123}+1+2\eps\right)}{\Gamma\left(-n_{123}-3\eps\right)\Gamma\left(-n_1+\frac{1}{2}\right)}
 \right. \nonumber \\
 & & \left.
  + \left(-\frac{4}{x}\right)^{3\eps} \frac{\Gamma\left(-n_1-2\eps\right)\Gamma\left(-n_1-\eps\right)\Gamma\left(-n_2-\eps\right)\Gamma\left(-n_3-\eps\right)\Gamma\left(n_{123}+1+3\eps\right)}{\Gamma\left(-n_{123}-4\eps\right)\Gamma\left(-n_1+\frac{1}{2}-\eps\right)}
 \right. \nonumber \\
 & & \left.
  + 4^{\eps} \frac{\Gamma\left(-n_1+\eps\right)\Gamma\left(-n_1+2\eps\right)\Gamma\left(-n_2-\eps\right)\Gamma\left(-n_3+\eps\right)\Gamma\left(n_{123}+1\right)}{\Gamma\left(-n_{123}-\eps\right)\Gamma\left(-n_1+\frac{1}{2}+\eps\right)}
 \right. \nonumber \\
 & & \left.
  + \left(-\frac{16}{x}\right)^{\eps} \frac{\Gamma\left(-n_1-\eps\right)\Gamma\left(-n_1+\eps\right)\Gamma\left(-n_2-\eps\right)\Gamma\left(-n_3+\eps\right)\Gamma\left(n_{123}+1+\eps\right)}{\Gamma\left(-n_{123}-2\eps\right)\Gamma\left(-n_1+\frac{1}{2}\right)}
 \right. \nonumber \\
 & & \left.
  + \left(-\frac{8}{x}\right)^{2\eps} \frac{\Gamma\left(-n_1-2\eps\right)\Gamma\left(-n_1-\eps\right)\Gamma\left(-n_2-\eps\right)\Gamma\left(-n_3+\eps\right)\Gamma\left(n_{123}+1+2\eps\right)}{\Gamma\left(-n_{123}-3\eps\right)\Gamma\left(-n_1+\frac{1}{2}-\eps\right)}
 \right. \nonumber \\
 & & \left.
  + 4^{\eps} \frac{\Gamma\left(-n_1+\eps\right)\Gamma\left(-n_1+2\eps\right)\Gamma\left(-n_2+\eps\right)\Gamma\left(-n_3-\eps\right)\Gamma\left(n_{123}+1\right)}{\Gamma\left(-n_{123}-\eps\right)\Gamma\left(-n_1+\frac{1}{2}+\eps\right)}
 \right. \nonumber \\
 & & \left.
  + \left(-\frac{16}{x}\right)^{\eps} \frac{\Gamma\left(-n_1-\eps\right)\Gamma\left(-n_1+\eps\right)\Gamma\left(-n_2+\eps\right)\Gamma\left(-n_3-\eps\right)\Gamma\left(n_{123}+1+\eps\right)}{\Gamma\left(-n_{123}-2\eps\right)\Gamma\left(-n_1+\frac{1}{2}\right)}
 \right. \nonumber \\
 & & \left.
  + \left(-\frac{8}{x}\right)^{2\eps} \frac{\Gamma\left(-n_1-2\eps\right)\Gamma\left(-n_1-\eps\right)\Gamma\left(-n_2+\eps\right)\Gamma\left(-n_3-\eps\right)\Gamma\left(n_{123}+1+2\eps\right)}{\Gamma\left(-n_{123}-3\eps\right)\Gamma\left(-n_1+\frac{1}{2}-\eps\right)}
 \right. \nonumber \\
 & & \left.
  + 16^{\eps} \frac{\Gamma\left(-n_1-\eps\right)\Gamma\left(-n_1+\eps\right)\Gamma\left(-n_2+\eps\right)\Gamma\left(-n_3+\eps\right)\Gamma\left(n_{123}+1\right)}{\Gamma\left(-n_{123}-\eps\right)\Gamma\left(-n_1+\frac{1}{2}\right)}
 \right. \nonumber \\
 & & \left.
  + \left(-\frac{64}{x}\right)^{\eps} \frac{\Gamma\left(-n_1-2\eps\right)\Gamma\left(-n_1-\eps\right)\Gamma\left(-n_2+\eps\right)\Gamma\left(-n_3+\eps\right)\Gamma\left(n_{123}+1+\eps\right)}{\Gamma\left(-n_{123}-2\eps\right)\Gamma\left(-n_1+\frac{1}{2}-\eps\right)}
 \right\}.
\eq
For the boundary value we are only interested in the leading term (with all logarithms) in an $1/x$-expansion.
This leading term is given by setting $n_1=n_2=n_3=0$ in the expression above.
We obtain
\bq
\lefteqn{
 I_{1111}
 = 
 - \frac{2}{x}
 e^{3 \gamma_E \eps}
 \left\{
  2 \Gamma\left(\eps\right)^3
  + 3 \left(-\frac{1}{x}\right)^{\eps} \frac{\Gamma\left(-\eps\right)^2\Gamma\left(\eps\right)^2\Gamma\left(1+\eps\right)}{\Gamma\left(-2\eps\right)}
 \right.
} & & \\
 & & \left.
  + 2 \left(-\frac{1}{x}\right)^{2\eps} \frac{\Gamma\left(-\eps\right)^3\Gamma\left(\eps\right)\Gamma\left(1+2\eps\right)}{\Gamma\left(-3\eps\right)}
  + \frac{1}{2} \left(-\frac{1}{x}\right)^{3\eps} \frac{\Gamma\left(-\eps\right)^4\Gamma\left(1+3\eps\right)}{\Gamma\left(-4\eps\right)}
 \right\}
 + {\mathcal O}\left(x^{-2}\right).
 \nonumber
\eq
This agrees with the first part of eq. (A.8) in \cite{Broedel:2019kmn}.
In the limit $1/x\rightarrow 0$ we further have
\bq
 \frac{\omega_1}{\pi^2} \; = \; - \frac{12}{x}
 + {\mathcal O}\left(x^{-2}\right)
 & \mbox{and} &
 \qbar \; = \; - \frac{1}{x}
 + {\mathcal O}\left(x^{-2}\right).
\eq
We find
\bq
 I_2
 & = &
 \left[ \frac{4}{3} \zeta_3 - \frac{1}{3}\ln^3\left(\qbar\right) \right] \eps^3 
 + \left[ 2 \zeta_4 + 10 \zeta_3 \ln\left(\qbar\right) + \zeta_2 \ln^2\left(\qbar\right) - \frac{1}{2} \ln^4\left(\qbar\right) \right] \eps^4
 \nonumber \\
 & &
 + \left[ 
          36 \zeta_5 - 8 \zeta_2 \zeta_3
          + \frac{25}{2} \zeta_4 \ln\left(\qbar\right)
          + 23 \zeta_3 \ln^2\left(\qbar\right)
          + \frac{3}{2} \zeta_2 \ln^3\left(\qbar\right)
          - \frac{5}{12} \ln^5\left(\qbar\right)
   \right] \eps^5
 \nonumber \\
 & &
 + \left[ 
          \frac{537}{8} \zeta_6 - 84 \zeta_3^2
          + \left( 150 \zeta_5 - 31 \zeta_2 \zeta_3 \right) \ln\left(\qbar\right)
          + \frac{125}{4} \zeta_4 \ln^2\left(\qbar\right)
          + \frac{88}{3} \zeta_3 \ln^3\left(\qbar\right)
 \right. \nonumber \\
 & & \left.
          + \frac{4}{3} \zeta_2 \ln^4\left(\qbar\right)
          - \frac{1}{4} \ln^6\left(\qbar\right)
   \right] \eps^6
 + {\mathcal O}\left(\qbar\right)
 + {\mathcal O}\left(\eps^7\right).
\eq
This corrects the second part of eq. (A.8) in \cite{Broedel:2019kmn}.
(The expansion is of uniform weight, weight drops do not occur.)

% -----------------------------------------------------------------------------

\section{Supplementary material}
\label{sect:supplement}

Attached to the arxiv version of this article is an electronic file in ASCII format with {\tt Maple} syntax, defining the quantities
\begin{center}
 \verb|I_symb|, \; \verb|I_qbar|.
\end{center}
The vector \verb|I_symb| contains the results for the master integrals up to order $\eps^6$ in terms of iterated integrals.
The vector \verb|I_qbar| contains the results for the master integrals up to order $\eps^6$ as an expansion in $\qbar$ up to order $\qbar^{30}$.

\end{appendix}

% ----------------------------------------------
% references
\bibliography{/home/stefanw/notes/biblio}
\bibliographystyle{/home/stefanw/latex-style/h-physrev5}

\end{document}